\documentclass[11pt]{article}

\usepackage{iftex}
\ifpdf\else
  \ifXeTeX\else
    \PassOptionsToPackage{dvipdfmx}{geometry}
    \PassOptionsToPackage{dvipdfmx}{graphicx}
    \PassOptionsToPackage{dvipdfmx}{color}
    \PassOptionsToPackage{dvipdfmx}{hyperref}
  \fi
\fi

\usepackage[a4paper,margin=32mm]{geometry}
\usepackage{amsmath,amssymb,amsthm,mathtools,mathrsfs}
\usepackage{pxfonts}
\usepackage{microtype}
\usepackage[colorlinks=true,citecolor=blue,linkcolor=blue,urlcolor=blue]{hyperref}

\newtheorem{theorem}{Theorem}[section]
\newtheorem{proposition}[theorem]{Proposition}

\newtheorem{corollary}[theorem]{Corollary}
\newtheorem{assumption}[theorem]{Assumption}
\theoremstyle{remark}

\newtheorem{example}[theorem]{Example}

\newcommand{\E}{\mathsf E}
\newcommand{\Q}{\mathsf Q}
\newcommand{\Pbs}{P_{\mathrm{BS}}}
\newcommand{\cE}{\mathcal E}

\title{Yet another asymptotic formula for implied volatility}
\author{Masaaki Fukasawa\thanks{The University of Osaka.
Email: \href{mailto:fukasawa@sigmath.es.osaka-u.ac.jp}{\nolinkurl{fukasawa@sigmath.es.osaka-u.ac.jp}}}}
\date{}
\hypersetup{
  pdftitle={Yet another asymptotic formula for implied volatility},
  pdfauthor={Masaaki Fukasawa}
}

\begin{document}

\maketitle

\begin{abstract}
We derive a first-order representation of Black--Scholes implied
variance in a continuous local martingale model.  Total implied variance is
the conditional expectation of the quadratic variation of the log
price given its terminal value, up to a smaller-order term, for
bounded standardized log-strikes.  The framework incorporates small volatility-of-volatility, fast mean-reverting, and short-maturity asymptotics.
\end{abstract}

\section{Introduction}

The implied volatility has been a basic quantity in financial engineering, which still is not yet fully understood. Let $S$ be a
strictly positive continuous local martingale modeling an underlying asset price and let
\[
 A_T=\langle\log S\rangle_T.
\]
The quantity $A_T$ is the realized total variance of the log price.
The purpose of this paper is to verify a simple approximation
\begin{equation}\label{eq:intro-formula}
 \widehat w(k,T)
 \approx \E[A_T\mid S_T=S_0e^k]
\end{equation}
for the Black--Scholes total implied variance $\widehat w(k,T)$
defined from put prices,
or equivalently,
\[
 \widehat\sigma(k,T)^2
 \approx \frac1T\E[A_T\mid S_T=S_0e^k]
\]
for the Black--Scholes annualized implied variance $\widehat \sigma(k,T)^2$,
when $A_T/T$ is close to a deterministic value, under quantitative
concentration and Gaussian smoothing conditions, and for bounded
standardized log-strikes.
Here, we work under a pricing measure $\Q$, with zero interest and dividend
rates, and write $\E$ for expectation under $\Q$.  

Formula \eqref{eq:intro-formula} was conjectured in
Fukasawa~\cite[Remark~2.5]{Fukasawa2026}, where it was obtained by a
formal substitution in a martingale expansion for stochastic
volatility.  We give a rigorous first-order
validation extending the
conditional Gaussian smoothing and stopped Black--Scholes argument
of \cite[Section~4.1]{Fukasawa2026}, retaining the terminal conditional
expectation before taking limits.  The present formulation allows
varying maturities and continuous quadratic-variation clocks without
an absolute-continuity assumption.

The conditional expectation in \eqref{eq:intro-formula} averages realized total variance over paths ending at the strike. A related interpretation arises from Gamma-weighted representations of implied variance given by
Gatheral~\cite{Gatheral} and proved by  Keller--Ressel and
Teichmann~\cite{KT}.  A related exact
identity is used in Bergomi~\cite[Section~2.4.1]{Bergomi} to interpret implied
variance through the profit and loss of a delta-hedged option.

\section{Framework and main results}
\label{sec:results}

\subsection{Black--Scholes notation}

Let $\Phi$ and $\phi$ denote the standard normal distribution function
and density.  For $s,K>0$ and total variance $q>0$, let
\begin{equation}\label{eq:BS-put}
 \Pbs(s,K,q)=K\Phi(-d_2)-s\Phi(-d_1),
 \qquad
 d_{1,2}=\frac{\log(s/K)}{\sqrt q}\pm\frac{\sqrt q}{2}.
\end{equation}
At $q=0$ we set $\Pbs(s,K,0)=(K-s)_+$.  Regarding $q$ as a time
variable, the Black--Scholes equation and the Gamma--Vega identity are
\begin{equation}\label{eq:BS-PDE}
 \partial_q\Pbs(s,K,q)
 =\frac12s^2\partial_{ss}\Pbs(s,K,q)
 =\frac{K}{2\sqrt q}\phi(d_2).
\end{equation}
We use
\begin{equation}\label{eq:BS-Gamma}
 \Gamma_{\mathrm{BS}}(s,K,q)
 :=\partial_{ss}\Pbs(s,K,q),
 \qquad q>0,
\end{equation}
for Black--Scholes Gamma parameterized by remaining total variance.

For a model price
\[
 P(K,T)=\E[(K-S_T)_+],
\]
the total implied variance $\widehat w(k,T)$, where
$K=S_0e^k$, is defined by
\begin{equation}\label{eq:implied-total-variance}
 P(K,T)=\Pbs(S_0,K,\widehat w(k,T)).
\end{equation}
Thus $\widehat w(k,T)=T\widehat\sigma(k,T)^2$.
Throughout, implied variance is defined from put prices.

\subsection{Framework}
\label{subsec:framework}

We now specify the asymptotic framework.  All limits and asymptotic statements below are understood as $n\to\infty$,
unless otherwise specified.
For every integer $n\ge1$, let $T_n>0$ and let $S^n$ be a
strictly positive continuous $\Q$-local martingale on $[0,T_n]$, with
$S_0^n=S_0$.  Write
\begin{equation}\label{eq:stochastic-logarithm}
 \frac{S^n}{S_0}=\cE(M^n),
 \qquad
 M_t^n=\int_0^t\frac{\mathrm dS_u^n}{S_u^n},
\end{equation}
where $M^n$ is a continuous local martingale.  Put
\begin{equation}\label{eq:qv}
 A_t^n=\langle M^n\rangle_t,
 \qquad A^n=A_{T_n}^n.
\end{equation}
Then
\[
 \log\frac{S_{T_n}^n}{S_0}
 =M_{T_n}^n-\frac12A^n,
 \qquad
 A^n=\langle\log S^n\rangle_{T_n}.
\]
We assume throughout that $A^n$ is strictly positive almost
surely and integrable, and set
\begin{equation}\label{eq:mean-total-variance}
 Z^n:=\frac{A^n}{\E[A^n]},
 \qquad
 v_n:=\E[A^n]\in(0,\infty).
\end{equation}

\begin{assumption}[Quadratic-variation perturbation]\label{ass:qv}
There exists a deterministic sequence $r_n>0$ such that
\begin{equation}\label{eq:scales}
 r_n(1+\sqrt{v_n})\longrightarrow0
\end{equation} 
and that
\begin{equation}\label{eq:short-time-localization}
 \Q\bigl(|Z^n-1|>\eta\bigr)
 =o\!\left(r_n\sqrt{v_n}\right)
\end{equation}
for every $\eta>0$,  and, with
\begin{equation}\label{eq:Y-definition}
 Y^n
 :=\frac{Z^n-1}{r_n\sqrt{Z^n}},
\end{equation}
the family
$ \left\{Y^n\right\}_{n\ge1}$
is uniformly integrable. 
\end{assumption}

Since $\E[Z^n]=1$,  the family
$\{\sqrt{Z^n}\}$ is bounded in $L^2$.  Moreover,
\[
 (Z^n)^{-1/2}
 =\sqrt{Z^n}-r_n Y^n.
\]
Since $r_n\to0$, the sequence $\{r_n\}$ is bounded.  Thus
the uniform integrability of $\{Y^n\}$ and the preceding identity also make
$\{(Z^n)^{-1/2}\}$ uniformly integrable.

When $v_n$ is bounded away from zero, the condition
\eqref{eq:short-time-localization} is automatic from
the uniform integrability of $\{Y^n\}$.
   To see this, on $\{|Z^n-1|>\eta\}$,
\[
 |Y^n|\ge\frac{c_\eta}{r_n},\qquad
 c_\eta := \inf_{\substack{z>0\\|z-1|>\eta}}
                  \frac{|z-1|}{\sqrt{z}} > 0.
\]
The uniform integrability gives a probability of $o(r_n)$, which is
$o(r_n\sqrt{v_n})$ in that case.  When
$v_n\to0$,
\eqref{eq:short-time-localization} is the additional control required
at the vanishing option-price scale.

The quantitative tail condition has convenient sufficient forms:

\begin{proposition}\label{prop:Lp}
Suppose that \eqref{eq:scales} holds and that for some $p>1$,
\begin{equation}\label{eq:Lp-condition}
 \sup_n\E[|Y^n|^p]<\infty
\end{equation}
and
\begin{equation}\label{eq:Lp-rate}
 \frac{r_n^{p-1}}{\sqrt{v_n}}
 \longrightarrow0.
\end{equation}
Then Assumption~\ref{ass:qv} is met.
\end{proposition}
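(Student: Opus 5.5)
Assumption~\ref{ass:qv} has two parts to verify, since \eqref{eq:scales} is assumed. First, the family $\{Y^n\}$ must be uniformly integrable. Second, the tail bound \eqref{eq:short-time-localization} must hold for every $\eta>0$. Uniform integrability follows at once from \eqref{eq:Lp-condition} by the de la Vallée-Poussin criterion, because $p>1$. Explicitly, for any $L>0$,
\[
 \E\bigl[|Y^n|\,;\,|Y^n|>L\bigr]\le L^{1-p}\sup_m\E[|Y^m|^p],
\]
and the right-hand side tends to $0$ as $L\to\infty$, uniformly in $n$.

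For the tail condition I would reuse the lower bound on $|Y^n|$ noted after Assumption~\ref{ass:qv}. Fix $\eta>0$ and set
\[
 c_\eta=\inf_{z>0,\ |z-1|>\eta}\frac{|z-1|}{\sqrt z}.
\]
This constant is strictly positive. The function $z\mapsto |z-1|/\sqrt z$ is continuous and positive on $\{z>0:|z-1|\ge\eta\}$. It tends to $+\infty$ both as $z\downarrow0$ and as $z\to\infty$. Its infimum over that set is therefore attained at $z=1\pm\eta$ (or approached on the set, if $\eta\ge 1$), and is positive.

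On the event $\{|Z^n-1|>\eta\}$ we then have $|Y^n|\ge c_\eta/r_n$. Markov's inequality in $L^p$ gives
\[
 \Q\bigl(|Z^n-1|>\eta\bigr)\le \Q\bigl(|Y^n|\ge c_\eta/r_n\bigr)\le c_\eta^{-p}\,r_n^{p}\,\sup_m\E[|Y^m|^p].
\]
Dividing by $r_n\sqrt{v_n}$ leaves a constant times $r_n^{p-1}/\sqrt{v_n}$. This tends to $0$ by \eqref{eq:Lp-rate}, which gives \eqref{eq:short-time-localization}.

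No step is a real obstacle. The only care needed is the positivity of $c_\eta$ when $\eta\ge1$: then the constraint $|z-1|>\eta$ forces $z>1+\eta$, and the function is increasing there, so the infimum is still positive. It is also worth noting that $r_n>0$, so the division by $r_n\sqrt{v_n}$ is legitimate. Hypothesis \eqref{eq:scales} is not needed in the argument beyond being part of the assumption to be met.
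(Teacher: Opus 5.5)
Your proof is correct and follows the paper's argument: uniform integrability from the $L^p$ bound, the lower bound $|Y^n|\ge c_\eta/r_n$ on $\{|Z^n-1|>\eta\}$, Markov's inequality giving a probability $O(r_n^p)$, and then the rate condition \eqref{eq:Lp-rate}. Your monotonicity check that $c_\eta=\eta/\sqrt{1+\eta}>0$ fills in a detail that the paper only asserts.
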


\begin{proof}
The $L^p$ bound implies the  uniform integrability of $\{Y^n\}$.  Moreover, on
$\{|Z^n-1|>\eta\}$,
\[
 |Y^n|
 =\frac{|Z^n-1|}
        {r_n\sqrt{Z^n}}
 \ge\frac{c_\eta}{r_n}.
\]
Markov's inequality gives a probability
$O(r_n^p)$.  Equation \eqref{eq:Lp-rate} completes the proof.
\end{proof}

We introduce a structural condition:

\begin{assumption}[Independent Brownian smoothing component]
\label{ass:smoothing}
There are continuous local martingales $J^n$ and $N^n$,
starting from zero, such that
\begin{equation}\label{eq:orthogonal-decomposition}
 M^n=J^n+N^n,
 \qquad \langle J^n,N^n\rangle=0.
\end{equation}
Equivalently,
$\mathrm dS_t^n/S_t^n
=\mathrm dJ_t^n+\mathrm dN_t^n$.
Set
\[
 C^n=\langle J^n\rangle,
 \qquad D^n=\langle N^n\rangle.
\]
There is a $\sigma$-field $\mathscr H^n$ such that the paths of
$J^n,C^n,D^n$ up to $T_n$ are measurable with respect to $\mathscr H^n$ and
conditionally on $\mathscr H^n$,
\begin{equation}\label{eq:conditional-time-change}
 \{N_t^n\}_{0\le t\le T_n}
 \ \stackrel{d}{=}\ \{B_{D_t^n}^n\}_{0\le t\le T_n},
\end{equation}
where $B^n$ is a standard Brownian motion independent of
$\mathscr H^n$.  Finally, for a constant
$\underline c\in(0,1]$ independent of $n$,
\begin{equation}\label{eq:clock-nondegeneracy}
 D_{T_n}^n\ge
 \underline c A^n\qquad\Q\text{-a.s.}
\end{equation}
\end{assumption}

Here $A_t^n=C_t^n+D_t^n$.  Equivalently,
conditionally on $\mathscr H^n$, $N^n$ is a centered
Gaussian process with independent increments and variance clock
$D^n$.  The Brownian motion in
\eqref{eq:conditional-time-change} must be independent of the whole
environment $\mathscr H^n$, not only of $D^n$.  The
clock itself may depend on $J^n$.  This is the usual situation
in stochastic-volatility models.

\begin{example}[Brownian-factor stochastic volatility]
Let $(B,B^\perp)$ be a two-dimensional standard Brownian motion and
let $\{\mathscr G_t\}$ be the filtration generated by the first component $B$.
A stochastic volatility model has the form 
\begin{equation*}
 M^n_t = \int_0^t \sqrt{V^n_s}\,\mathrm{d}W_s, \quad  W = \rho B + \sqrt{1-\rho^2} B^\perp
\end{equation*}
 for a positive continuous $\{\mathscr G_t\}$-adapted process $V^n$ and $\rho \in (-1,1)$.
Then, Assumption~\ref{ass:smoothing} is met with
\begin{equation*}
 \mathscr H^n = \mathscr G_{T_n}, \quad
J^n_t = \rho \int_0^t \sqrt{V^n_s}\,\mathrm{d}B_s, \quad
N^n_t = \sqrt{1-\rho^2} \int_0^t \sqrt{V^n_s}\,\mathrm{d}B^\perp_s, \quad
\underline c = 1 - \rho^2
\end{equation*}
because, conditionally on $\mathscr G_{T_n}$, the integrand in
$N^n$ is fixed and $B^\perp$ remains an independent Brownian motion.
Thus $N^n$ is conditionally Gaussian with independent increments
and variance clock $(1-\rho^2)\int_0^t V_s^n\,\mathrm ds$.
\end{example}

\subsection{Main results}
\label{subsec:main-results}

Let $k_n\in\mathbb R$, set
$K_n=S_0e^{k_n}$, and denote by
$\widehat w^n(k_n,T_n)$ the total implied
variance defined by \eqref{eq:implied-total-variance} for the model
$S^n$.  Define
\begin{equation}\label{eq:standardized-strike}
 z_n=\frac{k_n+v_n/2}{\sqrt{v_n}}.
\end{equation}

\begin{theorem}[Conditional quadratic-variation formula]
\label{thm:main}
Suppose Assumptions~\ref{ass:qv} and \ref{ass:smoothing} hold.  If
$\{z_n\}$ is bounded, then the total implied variance
of the put with maturity $T_n$ and strike $K_n$ satisfies
\begin{equation}\label{eq:main-result}
 \widehat w^n(k_n,T_n)
 =\E\!\left[A^n\mid
       S_{T_n}^n=K_n\right]
  +o(v_n r_n).
\end{equation}
The conditional expectation is the continuous density version defined
in \eqref{eq:conditional-A} below.
\end{theorem}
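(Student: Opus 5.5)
The plan is to condition on $\mathscr H^n$ and use the independent Brownian component to write the put price as an average of Black--Scholes prices. Given $\mathscr H^n$, the log price satisfies
\[
 \log\frac{S^n_{T_n}}{S_0}=J^n_{T_n}-\tfrac12 C^n_{T_n}+N^n_{T_n}-\tfrac12 D^n_{T_n},
 \qquad N^n_{T_n}\sim\mathcal N\bigl(0,D^n_{T_n}\bigr)
\]
by \eqref{eq:conditional-time-change}. Set $\tilde S^n:=S_0\cE(J^n)_{T_n}$. Then
\[
 P(K_n,T_n)=\E\bigl[\Pbs(\tilde S^n,K_n,D^n_{T_n})\bigr].
\]
The same conditioning gives a continuous density for $S^n_{T_n}$, a Gaussian mixture whose variance is bounded below by $\underline c A^n$ via \eqref{eq:clock-nondegeneracy}. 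From this density we define $\E[A^n\mid S^n_{T_n}=K]$ as the ratio $\E[A^n\,p^n(K\mid\mathscr H^n)]/\E[p^n(K\mid\mathscr H^n)]$, where $p^n(\cdot\mid\mathscr H^n)$ is the conditional lognormal density. This is presumably the content of \eqref{eq:conditional-A}.

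The key step is a stopped Black--Scholes expansion around the deterministic total variance $v_n$. Consider the process $X_t=\Pbs\bigl(S^n_t,K_n,\,v_n-A^n_t\bigr)$ on $[0,\tau]$, where $\tau$ is the first time $A^n_t$ reaches $v_n(1-\eta)$. By Itô's formula and \eqref{eq:BS-PDE}, the $\mathrm dA$ terms cancel exactly, so $X$ is a local martingale. Now compare the payoff $(K_n-S^n_{T_n})_+$ with $\Pbs(S^n_{T_n},K_n,0)$, and expand to first order in $A^n-v_n$ using the Gamma--Vega identity. This gives
\[
 P(K_n,T_n)-\Pbs(S_0,K_n,v_n)\approx\E\Bigl[\tfrac12 K_n\,\frac{\phi(d_2^n)}{\sqrt{\cdot}}\,(A^n-v_n)\Bigr],
\]
with the Gamma weight evaluated near the terminal state. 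It is cleaner to do this inside the conditional Gaussian representation. Write $\Pbs(\tilde S^n,K_n,D^n)$ and $\Pbs(S_0,K_n,v_n)$ in terms of the standardized variables $Z^n$ and $z_n$. The difference is then $\sqrt{v_n}$ times a smooth function of $(\log(\tilde S^n/S_0)-\text{drift})/\sqrt{v_n}$ and $\sqrt{Z^n}$, and the perturbation $\sqrt{Z^n}-1=r_n\sqrt{Z^n}Y^n/(1+\sqrt{Z^n})$ is of order $r_n$. A first-order Taylor expansion in $r_nY^n$ produces the vega-weighted conditional mean of $A^n-v_n$. Dividing by the vega $\partial_q\Pbs=K_n\phi(d_2)/(2\sqrt{v_n})$, which is of order $\sqrt{v_n}$ times $\phi(z_n)$ since $z_n$ is bounded, turns this into a total-variance correction of order $v_nr_n$. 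The Gaussian kernel $\phi(d_2)$ evaluated at the terminal log price is exactly, up to $o(1)$, the density weight in the conditional expectation. Hence $\widehat w^n-v_n=\E[A^n-v_n\mid S^n_{T_n}=K_n]+o(v_nr_n)$, and adding $v_n$ to both sides gives \eqref{eq:main-result}.

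For the remainder control, I would split on the event $\{|Z^n-1|\le\eta\}$ and its complement. On the complement, option prices are bounded by $K_n$, and the probability is $o(r_n\sqrt{v_n})$ by \eqref{eq:short-time-localization}. That is exactly $o(\text{vega}\cdot v_nr_n)$, so this part is negligible after dividing by the vega. On the good event, the second-order Taylor term is $O(r_n^2(Y^n)^2)$ times bounded derivatives of the Gaussian kernel. Uniform integrability of $\{Y^n\}$ and of $\{(Z^n)^{-1/2}\}$ turns this into $o(r_n)$ relative to the first-order term, via truncation $|Y^n|\le L$ followed by $L\to\infty$. The derivative bounds come from \eqref{eq:clock-nondegeneracy}: the conditional Gaussian smoothing has variance at least $\underline c A^n$, which is of order $v_n$ on the good event, so the kernels stay bounded. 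Finally, I invert the Black--Scholes map. $\Pbs(S_0,K_n,\cdot)$ has vega bounded below near $v_n$ uniformly for bounded $z_n$, so a price error of $o(\sqrt{v_n}\,v_nr_n/\sqrt{v_n})$ translates into an implied-variance error of $o(v_nr_n)$.

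The main obstacle I expect is identifying the first-order term with the density-version conditional expectation rather than with a Gamma weight at an intermediate time. In the stopped argument the weight appears as $\Gamma_{\mathrm{BS}}(S_t,K_n,v_n-A_t)$ integrated against $\mathrm dA_t$, not as a terminal density. The conditional smoothing by $N^n$ is what resolves this: the terminal Gaussian density given $\mathscr H^n$, evaluated at $K_n$, differs from $\phi(d_2)$ at the $\mathscr H^n$-state only by factors $\sqrt{Z^n}$ and $\sqrt{D^n/A^n}$ that are $1+O(r_nY^n)$ on the good event. These mismatches contribute only at second order, which is where the uniform integrability in Assumption~\ref{ass:qv} is used most delicately.
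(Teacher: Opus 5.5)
There is a genuine gap at the central step. After conditioning on $\mathscr H^n$ you must compare $\E[\Pbs(\tilde S^n,K_n,D^n_{T_n})]$ with $\Pbs(S_0,K_n,v_n)$. Your ``first-order Taylor expansion in $r_nY^n$'' works only if the zeroth-order term $\E[\Pbs(\tilde S^n,K_n,v_n-C^n_{T_n})]-\Pbs(S_0,K_n,v_n)$ is $o(K_nr_n\sqrt{v_n})$. That term is not small pathwise, because $\log(\tilde S^n/S_0)=J^n_{T_n}-C^n_{T_n}/2$ fluctuates on the scale $\sqrt{v_n}$.

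The martingale you propose does not control it. $\Pbs(S^n_t,K_n,v_n-A^n_t)$, stopped when $A^n_t$ reaches $(1-\eta)v_n$, is stopped no later than $T_n$ on $\{|Z^n-1|\le\eta\}$. You would then have to compare the payoff with $\Pbs(S^n_\tau,K_n,\eta v_n)$, which you do not address. Without stopping, $v_n-A^n_{T_n}<0$ on $\{A^n>v_n\}$, and a first-order expansion near zero residual variance is singular anyway.

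The missing idea is to run the Black--Scholes martingale only along $\cE(J^n)$ with the clock $C^n$: take $\Pbs(S_0\cE(J^n)_{t\wedge\tau_n},K_n,v_n-C^n_{t\wedge\tau_n})$, with $\tau_n$ the first time $C^n$ reaches $(1-\underline c/2)v_n$. The clock bound \eqref{eq:clock-nondegeneracy} is what makes this work. On $G_n=\{|Z^n-1|\le\eta_0\}$, with $(1-\underline c)(1+\eta_0)<1-\underline c/2$, it gives $C^n_{T_n}\le(1-\underline c)A^n<(1-\underline c/2)v_n$. So there is no stopping on $G_n$, and the residual variance stays above $\underline c v_n/2$, so the process is bounded. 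The price difference becomes $\E[\Pbs(\tilde S^n,K_n,D^n_{T_n})-\Pbs(\tilde S^n,K_n,v_n-C^n_{T_n});G_n]+o(K_nr_n\sqrt{v_n})$. This is a difference in the variance argument alone, with $D^n_{T_n}-(v_n-C^n_{T_n})=A^n-v_n$. Expanding at $D^n_{T_n}$, the vega $\partial_q\Pbs(\tilde S^n,K_n,D^n_{T_n})$ equals $\tfrac{K_n}{2}h^n(k_n)$ exactly, so there are no mismatch factors to control.

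Your claim that $\sqrt{D^n/A^n}=1+O(r_nY^n)$ is false in any case, since $D^n/A^n$ is only known to lie in $[\underline c,1]$. Also, for unbounded $v_n$ the variance-derivative of the scaled vega grows like $1+\sqrt{v_n}$. The remainder control therefore uses $r_n(1+\sqrt{v_n})\to0$, not merely bounded kernel derivatives.

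Second, you never establish the local limit $\sqrt{v_n}\E[h^n(k_n)]-\phi(z_n)\to0$, and the final identification needs it. Inversion divides the price correction $\tfrac12K_nv_nr_n\E[Y^n\sqrt{Z^n}h^n(k_n)]$ by the reference vega $K_n\phi(z_n)/(2\sqrt{v_n})$. That vega is of order $\phi(z_n)/\sqrt{v_n}$, not $\sqrt{v_n}\phi(z_n)$ as you wrote. On the other hand, $m^n(k_n)-v_n=v_nr_n\E[Y^n\sqrt{Z^n}h^n(k_n)]/\E[h^n(k_n)]$. The two agree to $o(v_nr_n)$ only if the scaled terminal density at the strike matches $\phi(z_n)$. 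Your remark that $\phi(d_2)$ is ``the density weight'' concerns the numerator, not this normalization, and weak convergence of the return alone does not give it.

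The paper obtains it from four ingredients:
\begin{itemize}
\item $Z^n\to1$ in probability;
\item the martingale CLT;
\item the normal-mixture identity $U+\sqrt\Gamma G\stackrel{d}{=}\mathcal N(0,1)$ along subsequences;
\item uniform integrability of $\sqrt{v_n}h^n(k_n)\le(2\pi\underline cZ^n)^{-1/2}$.
\end{itemize}
Alternatively, within your own framework, you can apply the same stopped $(\cE(J^n),C^n)$ martingale to $\partial_q\Pbs$, which solves the same equation. This is a bounded martingale started at $K_n\phi(z_n)/(2\sqrt{v_n})$, and its terminal value on $G_n$ is within $O\bigl(K_nv_n^{-1/2}\min\{1,(1+\sqrt{v_n})|Z^n-1|\}\bigr)$ of $\tfrac{K_n}{2}h^n(k_n)$.
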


The statement in annualized units is
\begin{equation}\label{eq:annualized-result}
 \widehat\sigma^n(k_n,T_n)^2
 =\frac1{T_n}
   \E\!\left[A^n\mid
       S_{T_n}^n=K_n\right]
  +o\!\left(\frac{v_n r_n}{T_n}\right).
\end{equation}

\paragraph{Comparison with the martingale expansion of Fukasawa (2026).}
The fixed-maturity predecessor of Theorem~\ref{thm:main} expands
implied variance through a limiting regression.  We state its
implied-variance consequence using a sequence $a_n\to0$ as the
absolute perturbation scale, to distinguish it from the relative
scale $r_n$ of Assumption~\ref{ass:qv}.

\begin{theorem}[Fukasawa~{\cite[Theorem~2.1, Corollary~2.4 and Remark~2.5]{Fukasawa2026}}]
\label{thm:fukasawa2026}
Fix $T>0$.  Let
\[
 \frac{\mathrm dS_t^n}{S_t^n}
 =\sqrt{V_t^n}\bigl(\rho\,\mathrm dW_t
          +\sqrt{1-\rho^2}\,\mathrm dW_t^\perp\bigr),
 \qquad |\rho|<1,
\]
where $V^n$ is nonnegative and c\`adl\`ag, adapted to a filtration
to which $W$ is adapted and from which $W^\perp$ is independent.
Here $(W,W^\perp)$ is a two-dimensional Brownian motion.
Suppose $v_n=\E[A^n]\to v_*>0$ and $a_n>0$ tends to zero.
Set
\[
 X^n=\frac1{\sqrt{v_n}}\int_0^T\frac{\mathrm dS_t^n}{S_t^n},
 \qquad R^n=\frac{A^n-v_n}{a_n},
 \qquad A^n=\int_0^T V_t^n\,\mathrm dt.
\]
Assume $\{R^n\}$ is uniformly integrable and
$(X^n,R^n)\Longrightarrow(X,R)$.
Let $b(x):=\E[R\mid X=x]$ denote the smooth version described below.
Then, for each fixed $K=S_0e^k>0$,
\begin{equation}\label{eq:fukasawa2026-expansion}
 \widehat w^n(k,T)=v_n+a_n b(z_n)+o(a_n),
 \qquad z_n=\frac{k+v_n/2}{\sqrt{v_n}}.
\end{equation}
\end{theorem}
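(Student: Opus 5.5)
The plan is to prove the expansion directly by conditional Gaussian smoothing, a first-order Taylor expansion in the variance, and inversion of the Black--Scholes price. I do not expect to route it through Theorem~\ref{thm:main}: the uniform integrability of $\{(Z^n)^{-1/2}\}$ required by Assumption~\ref{ass:qv} is not implied by the hypotheses on $R^n$.

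\emph{Step 1 (reduction to a first-order price expansion).} Fix $K=S_0e^k$. Since $v_n\to v_*>0$, the Vega $\partial_q\Pbs(S_0,K,q)=\frac{K}{2\sqrt q}\phi(d_2)$ from \eqref{eq:BS-PDE} is bounded away from zero for $q$ in a neighbourhood of $v_*$. Hence \eqref{eq:fukasawa2026-expansion} follows from the price expansion
\[
 P^n(K,T)=\Pbs(S_0,K,v_n)+a_n\,\partial_q\Pbs(S_0,K,v_n)\,b(z_n)+o(a_n).
\]
Indeed, this expansion first gives $\widehat w^n(k,T)\to v_*$, and then a mean-value argument on $q\mapsto\Pbs(S_0,K,q)$ converts the price error into an implied-variance error of the same order.

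\emph{Step 2 (conditional Gaussian smoothing and Taylor expansion).} Condition on the $\sigma$-field $\mathscr H$ generated by $W$ and $V^n$. Given $\mathscr H$, the component $\sqrt{1-\rho^2}\int\sqrt{V^n}\,\mathrm dW^\perp$ is centered Gaussian with variance $(1-\rho^2)A^n$. This gives the mixing formula
\[
 P^n=\E\bigl[\Pbs\bigl(S_0e^{\rho\int\sqrt{V^n}\mathrm dW-\rho^2A^n/2},\,K,\,(1-\rho^2)A^n\bigr)\bigr].
\]
The integrand is smooth in its arguments, and its total-variance dimension is bounded below by $(1-\rho^2)A^n$. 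I will write $A^n=v_n+a_nR^n$ and expand in the variance direction along the stopped-Black--Scholes path, as in \cite[Section~4.1]{Fukasawa2026}. The first-order term has the form
\[
 a_n\,\E\bigl[R^n\,h_n(X^n)\bigr],
\]
where $h_n$ is an explicit function built from the Gaussian kernel. It converges locally uniformly to $h(x)=\tfrac12 S_0 K\phi(x)$-type weights, which are exactly the Gamma, equivalently the Vega, density at $x=z_n$ by \eqref{eq:BS-PDE}.

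\emph{Step 3 (limit).} Use $(X^n,R^n)\Rightarrow(X,R)$ together with the uniform integrability of $R^n$ to pass to
\[
 \E[R\,h(X)]=\E[b(X)\,h(X)].
\]
Evaluating the Gaussian kernel against the smooth density of $X$ localizes this at the standardized strike. Here $b$ is the smooth version of $\E[R\mid X=\cdot\,]$, and its smoothness comes precisely from the independent $W^\perp$ component, which makes the law of $X$ given the environment Gaussian. The localization produces $\partial_q\Pbs(S_0,K,v_n)\,b(z_n)$.

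\emph{Main obstacle.} The hardest step is the remainder control: showing the second-order term is $o(a_n)$ using only uniform integrability of $R^n$, with no higher moments. I would first split on $\{|A^n-v_n|\le\eta\}$. On that event the second derivative in $q$ is uniformly bounded, because $(1-\rho^2)A^n$ stays near $(1-\rho^2)v_*>0$, so the remainder is $O(a_n^2(R^n)^2)\le O(a_n\eta|R^n|)$. On the complement, the Lipschitz bound $|\partial_q\Pbs|\lesssim K/\sqrt q$ is problematic as $q\to0$. There I would bound the price difference crudely by $K\,\Q(|a_nR^n|>\eta)$ and use uniform integrability to make this $o(a_n)$. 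Letting $\eta\downarrow0$ then finishes the argument.
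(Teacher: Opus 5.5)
Your Step~1 reduction and your remainder control are sound for fixed $K$ and $v_n\to v_*>0$. On $\{|A^n-v_n|\le\eta\}$, $\partial_{qq}\Pbs$ is bounded uniformly in the spot, and $\Q(|R^n|>\eta/a_n)\le(a_n/\eta)\E[|R^n|;|R^n|>\eta/a_n]=o(a_n)$. The gap is in Steps~2--3, which is where the content of the theorem lies.

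\emph{The first-order term is misidentified.} It is not of the form $a_n\E[R^nh_n(X^n)]$. By the Gamma--Vega identity \eqref{eq:BS-PDE}, with $J_T=\rho\int_0^T\sqrt{V^n_t}\,\mathrm dW_t$ and $D_T=(1-\rho^2)A^n$, it equals
\[
 \frac{a_nK}{2}\,\E\bigl[R^n\,h^n(k);\,|A^n-v_n|\le\eta\bigr],
 \qquad
 h^n(k)=\frac{1}{\sqrt{D_T}}\,
 \phi\Bigl(\frac{k-J_T+A^n/2}{\sqrt{D_T}}\Bigr).
\]
Here $h^n(k)$ is the conditional density of $\log(S_T^n/S_0)$ at $k$ given the environment. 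This weight is a function of $(J_T,A^n)$, not of $X^n$, so the hypothesis $(X^n,R^n)\Rightarrow(X,R)$ does not directly control it. Conversely, suppose the weight were a fixed smooth function $h$ of $X^n$. Then weak convergence and uniform integrability would give $\E[b(X)h(X)]=\int b\,h\,\phi\,\mathrm dx$, which is an average of $b$. Integrating a smooth kernel against the density of $X$ never yields the point value $b(z_n)$. The localization at the strike comes only from the random conditional density $h^n(k)$.

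\emph{What is missing is a limit theorem for the correlated component.} Set $(U^n,R^n):=(J_T/\sqrt{v_n},R^n)$.
\begin{itemize}
\item Tightness: $\langle J\rangle_T=\rho^2A^n$ is bounded in $L^1$ and $R^n$ is uniformly integrable, so $(U^n,R^n)$ is tight. Along a subsequence with $z_n\to z$, $(U^n,R^n)\Rightarrow(U,R')$.
\item Identification of the limit: $R^n$ is environment-measurable and $D_T/v_n\to d:=1-\rho^2$ in probability. Hence the identity
\[
 \E\bigl[e^{\mathrm isX^n+\mathrm itR^n}\bigr]
 =\E\bigl[e^{\mathrm isU^n-s^2D_T/(2v_n)+\mathrm itR^n}\bigr]
\]
gives $(X,R)\stackrel{d}{=}(U+\sqrt d\,G,R')$, with $G\sim\mathcal N(0,1)$ independent of $(U,R')$.
\item Smooth version of $b$: it follows that $\E[R;X\in\mathrm dx]=q(x)\,\mathrm dx$ with $q(x)=\E[R'd^{-1/2}\phi((x-U)/\sqrt d)]$ continuous. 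Thus $b\phi=q$ defines the smooth version of $b$, and $q$ is the same for every subsequential limit even if the law of $(U,R')$ is not unique.
\item Passage to the limit: rewrite the argument of the kernel as $\frac{z_n-U^n+(A^n-v_n)/(2\sqrt{v_n})}{\sqrt{D_T/v_n}}$. Since $\sqrt{v_n}h^n(k)$ is bounded on $\{|A^n-v_n|\le\eta\}$, the family $R^n\sqrt{v_n}h^n(k)1_{\{|A^n-v_n|\le\eta\}}$ is uniformly integrable. It converges in law to $R'd^{-1/2}\phi((z-U)/\sqrt d)$, so its expectation tends to $b(z)\phi(z)$.
\item Matching the price expansion: because $d_2(S_0,K,v_n)=-z_n$, we have $\partial_q\Pbs(S_0,K,v_n)=\frac{K}{2\sqrt{v_n}}\phi(z_n)$. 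The limit above is therefore exactly the first-order term of your Step~1 price expansion.
\end{itemize}
This is the mechanism the paper sketches around $q(x)=b(x)\phi(x)$ and uses in the proof of Theorem~\ref{thm:main}. Without it, your Step~3 does not produce $b(z_n)$.
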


Here $X$ is standard normal.  If $(X,R)$ is centered Gaussian,
then $b(z)=\E[XR]z$, giving an explicit linear correction.
The centering $v_n=\E[A^n]$ is the specialization relevant here;
the cited result also permits deterministic centerings with a
strictly positive limit.

Gaussian smoothing also explains the regularity of the limiting
regression.  Along a subsequence the normalized correlated
martingale and $R^n$ have a joint limit $(U,R)$, and
$X=U+\sqrt d\,G$ in law jointly with $R$, where
$d=1-\rho^2>0$ and $G$ is standard normal independent of $(U,R)$.
Consequently,
\[
 q(x):=b(x)\phi(x)
 =\E\!\left[\frac{R}{\sqrt d}
       \phi\!\left(\frac{x-U}{\sqrt d}\right)\right].
\]
Gaussian differentiation and $\E|R|<\infty$ imply that $q,q',q''$
are integrable and that $q,q'\to0$ at both infinities.
In fact, every derivative of the Gaussian kernel is bounded, so
differentiation under the expectation gives $q\in C^\infty$.
Since $\phi>0$, the ratio $b=q/\phi$ defines a $C^\infty$ version
of the regression.  No separate differentiability assumption on
$b$ is therefore needed in Theorem~\ref{thm:fukasawa2026}.

To see the formal connection with our formula, define
\[
 \widetilde X^n
 =\frac{\log(S_T^n/S_0)+v_n/2}{\sqrt{v_n}}
 =X^n-\frac{a_nR^n}{2\sqrt{v_n}}.
\]
The pairs $(\widetilde X^n,R^n)$ and $(X^n,R^n)$ have the same
weak limit, and conditioning on $\widetilde X^n=z_n$ is exactly
conditioning on $S_T^n=K$.  Formally replacing the limiting
regression $b(z_n)$ in \eqref{eq:fukasawa2026-expansion} by
$\E[R^n\mid\widetilde X^n=z_n]$ therefore gives
\[
 v_n+a_n\E[R^n\mid S_T^n=K]
 =\E[A^n\mid S_T^n=K].
\]
This is the substitution proposed in
\cite[Remark~2.5]{Fukasawa2026}.  Weak convergence alone does not
justify convergence of these regressions at a specified point.

Our theorem validates the conditional-variance representation
under a different set of assumptions.  With $r_n=a_n/v_n$, its
weighted perturbation is
\[
 Y^n=\frac{R^n}{\sqrt{Z^n}},\qquad Z^n=A^n/v_n.
\]
Uniform integrability of this weighted family is an additional
small-variance requirement relative to uniform integrability of
$R^n$ alone.  Since $v_n\to v_*>0$, it also implies our tail
condition, and the error $o(v_nr_n)$ becomes $o(a_n)$.
No joint limit or differentiability of a limiting regression is
needed for Theorem~\ref{thm:main}.  It also allows varying
maturities and continuous clocks without spot variance.
Whenever both theorems apply, their conclusions imply
\[
 \E[R^n\mid S_T^n=K]-b(z_n)\longrightarrow0.
\]
Thus the new formula retains the prelimit conditional expectation
whose limiting coefficient appears in Fukasawa~\cite{Fukasawa2026}.

\subsection{Short maturity asymptotics}
\label{subsec:short-maturity}

We first specialize Theorem~\ref{thm:main} to vanishing maturities,
then relate the result to the short-time expansion in
Fukasawa~\cite{Fukasawa2021Rough}.  A spot-variance moment criterion and a
Gaussian multifactor model give concrete sufficient conditions.

\begin{corollary}[Short maturity at the fluctuation scale]
\label{cor:short-time}
Let $T_n\to0$ and let $a(T)>0$ satisfy $a(T)\to0$ as
$T\downarrow0$.  Write $a_n=a(T_n)$, set
\begin{equation}\label{eq:average-qv-perturbation}
 Y^n
 :=\frac{Z^n-1}{a_n\sqrt{Z^n}},
\end{equation}
and suppose that
\[
 0<\underline v
 \le\frac{v_n}{T_n}
 \le\overline v<\infty.
\]
Suppose that $\{Y^n\}$ is uniformly integrable, that
Assumption~\ref{ass:smoothing} holds, and that, for every $\eta>0$,
\begin{equation}\label{eq:short-time-corollary-tail}
 \Q(|Z^n-1|>\eta)
 =o(a_n\sqrt{T_n}).
\end{equation}
If $k_n=O(\sqrt{T_n})$, then
\begin{align}
 \widehat\sigma^n(k_n,T_n)^2
 &=\E\!\left[
       \frac{A^n}{T_n}
       \mathrel{\Big|}S_{T_n}^n=K_n
      \right]+o(a_n).
 \label{eq:short-time-result}\\
 \widehat w^n(k_n,T_n)
 &=\E\!\left[
       A^n
       \mathrel{\Big|}S_{T_n}^n=K_n
      \right]+o(a_n T_n).
 \label{eq:short-time-total-result}
\end{align}
In particular, for $H\in(0,1/2)$ and $a(T)=T^H$,
\begin{equation}\label{eq:rough-short-time-result}
 \widehat\sigma^n(k_n,T_n)^2
 =\E\!\left[
       \frac{A^n}{T_n}
       \mathrel{\Big|}S_{T_n}^n=K_n
      \right]+o(T_n^H).
\end{equation}
In this specialization, \eqref{eq:short-time-corollary-tail} reads
\[
 \Q(|Z^n-1|>\eta)=o(T_n^{H+1/2}).
\]
\end{corollary}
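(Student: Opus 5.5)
The plan is to apply Theorem~\ref{thm:main} with the choice $r_n:=a_n$. The proof then reduces to checking Assumption~\ref{ass:qv} for this $r_n$ and rescaling the error term.

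\textbf{Checking \eqref{eq:scales}.} Since $v_n\le\overline v\,T_n\to0$, we have $r_n(1+\sqrt{v_n})\le a_n(1+\sqrt{\overline v T_n})\to0$, because $a_n=a(T_n)\to0$.

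\textbf{Uniform integrability.} With $r_n=a_n$, the variable $Y^n$ in \eqref{eq:Y-definition} coincides with \eqref{eq:average-qv-perturbation}. Its uniform integrability is therefore assumed directly.

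\textbf{Tail condition.} For \eqref{eq:short-time-localization} I use the lower bound $\sqrt{v_n}\ge\sqrt{\underline v}\sqrt{T_n}$. This gives $a_n\sqrt{T_n}\le\underline v^{-1/2}\,a_n\sqrt{v_n}$, so $o(a_n\sqrt{T_n})\subseteq o(r_n\sqrt{v_n})$, and \eqref{eq:short-time-corollary-tail} implies \eqref{eq:short-time-localization}. Assumption~\ref{ass:smoothing} is assumed outright, so Assumption~\ref{ass:qv} holds in full.

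\textbf{Bounded standardized strikes.} Next I check that $z_n$ is bounded. We have $|k_n|\le C\sqrt{T_n}$ and $\sqrt{v_n}\ge\sqrt{\underline v T_n}$, so $|k_n|/\sqrt{v_n}\le C/\sqrt{\underline v}$. Also $\sqrt{v_n}/2\to0$. Hence $z_n=k_n/\sqrt{v_n}+\sqrt{v_n}/2$ is bounded.

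\textbf{Applying the theorem.} Theorem~\ref{thm:main} now gives $\widehat w^n=\E[A^n\mid S^n_{T_n}=K_n]+o(v_na_n)$. Since $v_n\le\overline v T_n$, the error is $o(a_nT_n)$, which is \eqref{eq:short-time-total-result}. Dividing by $T_n$ gives \eqref{eq:short-time-result}. I would note that the conditional expectation of $A^n/T_n$ is just $T_n^{-1}$ times that of $A^n$, since $T_n$ is deterministic and the density version \eqref{eq:conditional-A} is linear.

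\textbf{Rough case.} For $a(T)=T^H$ the statement is immediate. The tail rate becomes $o(T_n^{H}\sqrt{T_n})=o(T_n^{H+1/2})$.

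The argument is pure bookkeeping; there is no real obstacle. The only point needing care is the direction of the inequalities: the lower bound $\underline v$ is used to control the tail rate and to bound $z_n$, while the upper bound $\overline v$ is used for \eqref{eq:scales} and for converting $o(v_na_n)$ into $o(a_nT_n)$.
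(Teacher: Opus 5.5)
Your proposal is correct and follows the paper's own proof: apply Theorem~\ref{thm:main} with $r_n=a_n$, use the two-sided bounds on $v_n/T_n$ to get \eqref{eq:scales}, the equivalence $r_n\sqrt{v_n}\asymp a_n\sqrt{T_n}$ for the tail condition, and boundedness of $z_n$, then rescale the error $o(v_na_n)$ to $o(a_nT_n)$. You also correctly note which direction of the inequality each step uses.
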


\begin{proof}[Proof of Corollary~\ref{cor:short-time}]
Apply Theorem~\ref{thm:main} with
$r_n=a_n$.  The bounds on
$v_n/T_n$ give
\[
 r_n(1+\sqrt{v_n})=O(a_n)\longrightarrow0,
 \qquad
 r_n\sqrt{v_n}\asymp
 a_n\sqrt{T_n}.
\]
They also show that $k_n=O(\sqrt{T_n})$ makes
$z_n$ bounded.  Thus
\eqref{eq:short-time-corollary-tail} is precisely the required
localization condition, up to fixed multiplicative constants.
\end{proof}

\paragraph{Connection with the short-time expansion.}
We state the expansion from \cite{Fukasawa2021Rough}
in our notation before comparing it with
Corollary~\ref{cor:short-time}.  For a fixed model with spot
variance $V$, write
\[
 A_T=\int_0^T V_t\,\mathrm dt,\qquad
 m(t)=\E[V_t],\qquad v_T=\int_0^T m(t)\,\mathrm dt,
 \qquad Z_T=A_T/v_T.
\]
\begin{theorem}[Fukasawa~{\cite[Theorem~2.1]{Fukasawa2021Rough}}]
\label{thm:fukasawa2021}
Suppose that $S$ is a strictly positive continuous martingale,
$\mathrm d\langle\log S\rangle_t=V_t\,\mathrm dt$, and $m$ is
positive and continuous at zero.  For some $H\in(0,1/2]$, suppose
that
\[
 U_t:=t^{-H}\left(\frac{V_t}{m(t)}-1\right)
\]
is uniformly integrable as $t\downarrow0$ and
\[
 \left(\frac{S_t/S_0-1}{\sqrt t},U_t\right)
 \Longrightarrow(\xi,\eta).
\]
Put $v_0=m(0)$, $\bar v_T=v_T/T$, and
$g(a)=\E[\eta\mid\xi=a]$.  For each fixed $x\in\mathbb R$ and
every sequence $T_n\downarrow0$,
\begin{equation}\label{eq:fukasawa2021-expansion}
 \widehat\sigma(x\sqrt{T_n},T_n)
 =\sqrt{\bar v_{T_n}}\{1+\alpha(x)T_n^H\}+o(T_n^H),
\end{equation}
with
\begin{equation}\label{eq:rough-alpha-bridge}
 2\alpha(x)=\int_0^1 u^H\int_{\mathbb R}
 g\!\left(x\sqrt u+\sqrt{v_0(1-u)}\,y\right)
\phi(y)\,\mathrm dy\,\mathrm du.
\end{equation}
\end{theorem}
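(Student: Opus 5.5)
The plan is to follow the stopped Black--Scholes argument used for Theorem~\ref{thm:main}, with the deterministic variance clock $v_t=\int_0^t m(s)\,\mathrm ds$ in place of the random clock $A_t$. Fix $T=T_n$ and $K=S_0e^{x\sqrt T}$, and set
\[
 F_t=\Pbs\bigl(S_t,K,v_T-v_t\bigr).
\]
Itô's formula together with \eqref{eq:BS-PDE} cancels the drift produced by the deterministic clock. This leaves
\[
 P(K,T)-\Pbs(S_0,K,v_T)
 =\frac12\E\!\int_0^T S_t^2\,\Gamma_{\mathrm{BS}}(S_t,K,v_T-v_t)\,
 \bigl(V_t-m(t)\bigr)\,\mathrm dt .
\]
To justify this I would localize $F$ and use that $S$ is a true martingale; the factor $S^2\Gamma_{\mathrm{BS}}$ is bounded by $C/\sqrt{v_T-v_t}$, which is integrable in $t$. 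I then substitute $V_t-m(t)=m(t)\,t^H U_t$ and change variables $t=uT$. This writes the right-hand side as
\[
 \frac{T^H}{2}\int_0^1 u^H\, T\,m(uT)\,
 \E\bigl[U_{uT}\,\kappa_{T,u}(\xi_{uT})\bigr]\,\mathrm du,
 \qquad \xi_t:=\frac{S_t/S_0-1}{\sqrt t},
\]
where $\kappa_{T,u}(y)$ denotes $S^2\Gamma_{\mathrm{BS}}$ evaluated at $S=S_0(1+y\sqrt{uT})$ with remaining variance $v_T-v_{uT}\approx v_0(1-u)T$.

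The second step is the limit of the inner expectation for fixed $u\in(0,1)$. By the Gamma--Vega identity, $\sqrt T\,\kappa_{T,u}(y)$ converges, locally uniformly in $y$, to a Gaussian kernel of the form
\[
 \frac{S_0}{\sqrt{v_0(1-u)}}\,
 \phi\!\left(\frac{x-y\sqrt u}{\sqrt{v_0(1-u)}}\right).
\]
For fixed $u<1$ these kernels are bounded, uniformly in $T$. The weak convergence $(\xi_t,U_t)\Rightarrow(\xi,\eta)$ along $t=uT\to0$ and the uniform integrability of $U_t$ then give
\[
 \E\bigl[U_{uT}\,\sqrt T\,\kappa_{T,u}(\xi_{uT})\bigr]\longrightarrow
 \E\bigl[g(\xi)\,\kappa_u(\xi)\bigr].
\]
Here I first pass to the limit with a bounded truncation of $U$ and control the remainder by uniform integrability; the identity $\E[\eta\,h(\xi)]=\E[g(\xi)h(\xi)]$ then introduces $g$. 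Since $\xi$ is $N(0,v_0)$, combining the Gaussian kernel with the law of $\xi$ and applying the Gaussian convolution identity turns this into $\int g\bigl(x\sqrt u+\sqrt{v_0(1-u)}\,y\bigr)\phi(y)\,\mathrm dy$, times Vega-type factors that do not depend on $u$.

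Third, I integrate over $u$ by dominated convergence. The domination is
\[
 \sqrt T\,\bigl|\E[U\kappa]\bigr|\le C(1-u)^{-1/2}\sup_t\E|U_t|,
\]
which is integrable on $(0,1)$. On the other side I linearize the put price in total variance, using the Vega $\frac{K}{2\sqrt q}\phi(d_2)$ at $q=v_T$, and match it against the right-hand side. This gives
\[
 \widehat w-v_T=2\alpha(x)\,v_T\,T^H+o(T^{1+H}).
\]
Taking square roots of $\widehat w/T$ yields \eqref{eq:fukasawa2021-expansion}, with $2\alpha$ given by \eqref{eq:rough-alpha-bridge}. The inversion of $\Pbs$ is legitimate because the price difference is $O(T^{1/2+H})$ while the Vega is of order $\sqrt T$ and bounded below for bounded $x$.

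I expect the main obstacle to be the limit in the second step near $u\uparrow1$, where the kernel degenerates and the bound $(1-u)^{-1/2}$ is the only control available. A second difficulty is that the convergence $\sqrt T\,\kappa_{T,u}\to\kappa_u$ is not uniform on unbounded sets, since $\kappa$ is expressed through $S_t$ rather than $\xi_t$. I would handle this by truncating $|\xi_t|\le L$. The region $|\xi_t|>L$ would be controlled by combining the kernel bound with uniform integrability of $U$ and tightness of $\xi_t$, which follows from $\E[(S_t/S_0-1)^2]=O(t)$.
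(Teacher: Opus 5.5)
This paper does not prove the statement. It quotes it from Fukasawa (2021) and afterwards gives only a formal Brownian-bridge heuristic, so there is no in-paper proof to match. Your argument is nevertheless a correct proof, and it takes a genuinely different route from the machinery the paper uses for Theorem~\ref{thm:main}.

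You apply It\^o's formula to $\Pbs(S_t,K,v_T-v_t)$ with the \emph{deterministic} forward-variance clock. The entire fluctuation $V_t-m(t)$ therefore lands in the Gamma term, and the Gaussian kernel $\kappa_u$ appears only in the limit $T\to0$. That is why you need the joint weak convergence, uniform integrability of $U_t$, and the limiting regression $g$. It is also why the coefficient comes out as an integrated limiting regression rather than a prelimit conditional expectation.

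The paper's proof of Theorem~\ref{thm:main} instead stops the Black--Scholes price along the \emph{random} clock $C$ and conditions on $\mathscr H^n$. The terminal Vega is then exactly the conditional density kernel, by \eqref{eq:vega-kernel}. This produces $\E[A^n\mid S_{T_n}^n=K_n]$ with no weak limits at all. The price is Assumption~\ref{ass:smoothing} and the tail condition \eqref{eq:short-time-localization}; your route needs neither, because Gaussianity enters only through the limit law of $\xi$.

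A few points to tighten.
\begin{itemize}
\item You use $\xi\sim\mathcal N(0,v_0)$ in the final convolution step, but it is not a hypothesis, so you must derive it. Since $\E\bigl|\int_0^t V_s\,\mathrm ds-v_t\bigr|\le\int_0^t m(s)s^H\E|U_s|\,\mathrm ds=O(t^{1+H})$, you get $\langle\log S\rangle_t/t\to v_0$ in $L^1$. The continuous-martingale CLT then gives $\log(S_t/S_0)/\sqrt t\Rightarrow\mathcal N(0,v_0)$. Finally, $(S_t/S_0-1)/\sqrt t$ differs from this by $o_{\Q}(1)$.
\item Tightness of $\xi_t$ is immediate from the assumed weak convergence. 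The bound $\E[(S_t/S_0-1)^2]=O(t)$ that you invoke is not available without extra moment assumptions.
\item The true-martingale property is not needed for the It\^o identity. $F$ is bounded by $K$, and the drift is dominated by $K|V_t-m(t)|/\sqrt{2\pi(v_T-v_t)}$ with $\E|V_t-m(t)|\le2m(t)$. Localization plus dominated convergence therefore suffices.
\item The region $u\uparrow1$ is not an obstacle. Pointwise convergence for each fixed $u<1$, together with your $(1-u)^{-1/2}$ bound, is exactly what dominated convergence requires.
\item The truncation $|\xi_t|\le L$ can be replaced by the extended continuous mapping theorem, since $\sqrt T\,\kappa_{T,u}\to\kappa_u$ locally uniformly.
\item In total-variance units the Vega $\frac{K}{2\sqrt q}\phi(d_2)$ is of order $T^{-1/2}$, not $\sqrt T$. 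Your inversion is unaffected once you first show $\widehat w/v_T\to1$ by monotonicity.
\end{itemize}
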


The coefficient uses the limiting regression of spot variance on
the return.  Corollary~\ref{cor:short-time} instead retains the
conditional expectation of integrated variance at the actual
maturity.  It requires smoothing and quantitative tail control,
but neither a joint limit nor the true-martingale assumption in
Theorem~\ref{thm:fukasawa2021}.

To connect this formally with our formula, freeze the return process
at $\sqrt{v_0}B$ on the rescaled interval $[0,1]$.
For $0<u<1$, the Brownian bridge identity is
\[
 \mathcal L\!\left(\frac{\sqrt{v_0}B_u}{\sqrt u}
           \mathrel{\Big|}\sqrt{v_0}B_1=x\right)
 =\mathcal N(x\sqrt u,v_0(1-u)).
\]
Replacing the conditional regression of $U_{Tu}$ by $g$, and using
independence of the future Brownian increment from the information
at time $u$, therefore suggests
\[
 \E[U_{Tu}\mid S_T=S_0e^{x\sqrt T}]
 \approx\int_{\mathbb R}
 g\!\left(x\sqrt u+\sqrt{v_0(1-u)}\,y\right)\phi(y)\,\mathrm dy.
\]
Since $V_{Tu}=m(Tu)\{1+T^H u^H U_{Tu}\}$, integration gives
\[
 \E\!\left[\frac{A_T}{T}\mathrel{\Big|}S_T=S_0e^{x\sqrt T}\right]
 =\bar v_T+2v_0\alpha(x)T^H+o(T^H)
 =\widehat\sigma(x\sqrt T,T)^2+o(T^H),
\]
formally recovering the new formula.  Passing to conditional
expectations at a fixed terminal value and integrating the limit
are the formal steps here; joint weak convergence alone does not
justify them.  The smoothing and integrability conditions of the
present theorem provide a rigorous route to the representation.

\paragraph{Sufficient spot-variance conditions.}
The averaging identity
\begin{equation}\label{eq:rough-averaging}
 \frac{Z_T-1}{T^H}
 =\int_0^T\frac{m(t)}{v_T}
       \left(\frac tT\right)^H U_t\,\mathrm dt
\end{equation}
transfers uniform integrability to the averaged perturbation
$(Z_T-1)/T^H$: apply Jensen's inequality to a convex
superlinear function witnessing uniform integrability, with value
zero at the origin.  This alone does not control the additional
factor $Z_T^{-1/2}$ in $Y_T$, or the quantitative tail condition
\eqref{eq:short-time-localization}.
The following moment strengthening gives both and verifies the
hypotheses of Corollary~\ref{cor:short-time} when
Assumption~\ref{ass:smoothing} also holds.

\begin{proposition}\label{prop:rough-spot-moments}
Suppose that $V_t>0$, that $m$ is positive and continuous at zero,
and that, for some $p>1+1/(2H)$ and $T_0>0$,
\begin{equation}\label{eq:rough-spot-moments}
 \sup_{0<t\le T_0}\E[|U_t|^{2p}]<\infty,
 \qquad
 \sup_{0<t\le T_0}
 \E\!\left[\left(\frac{V_t}{m(t)}\right)^{-p}\right]<\infty.
\end{equation}
Then Assumption~\ref{ass:qv} holds with $r_n=T_n^H$ for
every sequence $T_n\downarrow0$ with $T_n\le T_0$.
\end{proposition}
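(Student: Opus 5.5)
The plan is to reduce the statement to Proposition~\ref{prop:Lp}, applied with exponent $p$ and $r_n=T_n^H$, $v_n=v_{T_n}$, $Z^n=Z_{T_n}$. Two things must then be checked. The first is the moment bound $\sup_n\E[|Y^n|^p]<\infty$ for $Y^n=(Z_{T_n}-1)/(T_n^H\sqrt{Z_{T_n}})$. The second is the scale conditions \eqref{eq:scales} and \eqref{eq:Lp-rate}. Positivity and integrability of $A^n$ are immediate from $V_t>0$ and Fubini, since $\E[A_T]=v_T<\infty$ (finiteness of $m$ near zero follows from continuity at zero).

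\emph{Scales.} Because $m$ is positive and continuous at zero, we have $v_T/T\to m(0)\in(0,\infty)$ as $T\downarrow0$. Hence $v_n\asymp T_n$. Then $r_n(1+\sqrt{v_n})=O(T_n^H)\to0$, and
\[
 \frac{r_n^{p-1}}{\sqrt{v_n}}\asymp T_n^{H(p-1)-1/2}\longrightarrow0,
\]
precisely because $p>1+1/(2H)$. This is where the threshold on $p$ enters. (If $T_n$ does not tend to zero fast enough for the asymptotics to kick in at the start, only finitely many indices are affected, which is harmless for uniform integrability and for the $o(\cdot)$ tail condition.)

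\emph{Moment bound.} By Cauchy--Schwarz,
\[
 \E[|Y^n|^p]\le
 \E\!\left[\left|\frac{Z_{T_n}-1}{T_n^H}\right|^{2p}\right]^{1/2}
 \E\!\left[Z_{T_n}^{-p}\right]^{1/2}.
\]
For the first factor, use the averaging identity \eqref{eq:rough-averaging}. The measure $\mu_T(\mathrm dt)=\frac{m(t)}{v_T}(t/T)^H\,\mathrm dt$ on $[0,T]$ has total mass at most $1$, since $(t/T)^H\le1$ and $\int_0^T m(t)\,\mathrm dt=v_T$. Jensen/Hölder for a sub-probability measure with the convex function $|x|^{2p}$ gives
\[
 \Bigl|\int U_t\,\mu_T(\mathrm dt)\Bigr|^{2p}\le\mu_T([0,T])^{2p-1}\int|U_t|^{2p}\,\mu_T(\mathrm dt)\le\int|U_t|^{2p}\,\mu_T(\mathrm dt).
\]
Taking expectations and using Fubini bounds the first factor by $\sup_{t\le T_0}\E|U_t|^{2p}$. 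For the second factor, write
\[
 Z_T=\int_0^T \frac{V_t}{m(t)}\,\nu_T(\mathrm dt),\qquad \nu_T(\mathrm dt)=\frac{m(t)}{v_T}\,\mathrm dt,
\]
where $\nu_T$ is a probability measure. Since $x\mapsto x^{-p}$ is convex on $(0,\infty)$, Jensen gives $Z_T^{-p}\le\int (V_t/m(t))^{-p}\,\nu_T(\mathrm dt)$. Hence $\E[Z_T^{-p}]\le\sup_{t\le T_0}\E[(V_t/m(t))^{-p}]$. Both suprema are finite by \eqref{eq:rough-spot-moments}, so $\sup_n\E|Y^n|^p<\infty$, and Proposition~\ref{prop:Lp} yields Assumption~\ref{ass:qv}.

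I expect no serious obstacle. The only points needing care are that the mass of $\mu_T$ may be strictly less than one, which is why the convexity inequality is stated with the factor $\mu_T([0,T])^{2p-1}\le1$, and the measurability/Fubini justification for exchanging expectation with the time integrals. Both are routine given joint measurability of $(t,\omega)\mapsto V_t$.
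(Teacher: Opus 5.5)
Your proposal is correct and follows essentially the same route as the paper. Both arguments reduce to Proposition~\ref{prop:Lp} with exponent $p$ and $r_n=T_n^H$, and both bound $\E[|Y^n|^p]$ by Cauchy--Schwarz, combining an $L^{2p}$ estimate of $(Z_T-1)/T^H$ from the averaging identity with a Jensen bound on $\E[Z_T^{-p}]$ under the weights $m(t)\,\mathrm dt/v_T$; the threshold $p>1+1/(2H)$ enters through $T_n^{H(p-1)-1/2}\to0$ via $v_T\sim m(0)T$. The only cosmetic difference is that you control the first factor with H\"older/Jensen for the sub-probability measure $\mu_T$, where the paper cites Minkowski's integral inequality; the two are equivalent here.
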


\begin{proof}
By \eqref{eq:rough-averaging} and Minkowski's inequality,
$\|Z_T-1\|_{L^{2p}}=O(T^H)$.  Jensen's inequality with
probability weights $m(t)\,\mathrm dt/v_T$ gives
\[
 \E[Z_T^{-p}]
 \le\int_0^T\frac{m(t)}{v_T}
       \E\!\left[\left(\frac{V_t}{m(t)}\right)^{-p}\right]
       \,\mathrm dt=O(1).
\]
Consequently, H\"older's inequality yields
\[
 \E\left[\left|\frac{Z_T-1}{T^H\sqrt{Z_T}}\right|^p\right]
 \le T^{-Hp}\|Z_T-1\|_{L^{2p}}^p
          \E[Z_T^{-p}]^{1/2}=O(1).
\]
Since $v_T\sim m(0)T$ and
$T^{H(p-1)-1/2}\to0$, Proposition~\ref{prop:Lp} applies.
\end{proof}

\begin{example}[Gaussian multifactor volatility]
\label{ex:gaussian-multifactor}
Let $B^1,\ldots,B^d,B^\perp$ be independent standard Brownian
motions and let
\[
 X_t^i=\int_0^t K_i(t,s)\,\mathrm dB_s^i,\qquad i=1,\ldots,d,
\]
where the deterministic kernels give continuous adapted Gaussian
processes and, for some $H\in(0,1/2]$,
\begin{equation}\label{eq:gaussian-kernel-bound}
 \int_0^t K_i(t,s)^2\,\mathrm ds\le C t^{2H}.
\end{equation}
Take $V_t=f(t,X_t^1,\ldots,X_t^d)$, where $f>0$ is continuous in
$(t,x)$, twice continuously differentiable in $x$, and its spatial
derivatives are jointly continuous.  Suppose that, for some $C,a>0$,
\begin{equation}\label{eq:gaussian-f-growth}
 f(t,x)+f(t,x)^{-1}+|\nabla_x f(t,x)|
       +\|\nabla_x^2 f(t,x)\|\le Ce^{a|x|},
 \qquad 0\le t\le T_0.
\end{equation}
No differentiability in time is required.  The reciprocal bound
controls small variances while allowing $f$ to approach zero.
Set
\begin{equation}\label{eq:gaussian-multifactor-price}
 \frac{\mathrm dS_t}{S_t}
 =\sqrt{V_t}\left(\sum_{i=1}^d\rho_i\,\mathrm dB_t^i
       +\sqrt{1-|\rho|^2}\,\mathrm dB_t^\perp\right),
 \qquad |\rho|^2:=\sum_{i=1}^d\rho_i^2<1.
\end{equation}
Gaussian exponential moments and \eqref{eq:gaussian-f-growth} imply, for every
finite $q\ge1$,
\[
 \|V_t-f(t,0)\|_{L^q}=O(t^H),\qquad
 \|V_t-m(t)\|_{L^q}=O(t^H),\qquad m(t)\longrightarrow f(0,0).
\]
Indeed, the spatial mean-value estimate is bounded by
$C|X_t|e^{a|X_t|}$, whose $L^q$ norm is $O(t^H)$.
The reciprocal bound likewise gives
\[
 \sup_{0<t\le T_0}\E[V_t^{-q}]<\infty,
 \qquad
 \sup_{0<t\le T_0}\E[V_t^q]<\infty.
\]
After decreasing $T_0$ if necessary, $m(t)$ is bounded above and
away from zero.
Thus \eqref{eq:rough-spot-moments} holds for every finite $p$, and
Proposition~\ref{prop:rough-spot-moments} verifies
Assumption~\ref{ass:qv} with $r_n=T_n^H$.
For smoothing, take
\[
 J_t=\sum_{i=1}^d\rho_i\int_0^t\sqrt{V_s}\,\mathrm dB_s^i,
 \qquad
 N_t=\sqrt{1-|\rho|^2}\int_0^t\sqrt{V_s}\,\mathrm dB_s^\perp.
\]
Conditionally on the factor Brownian paths, $N$ is Gaussian with
independent increments and $D_T=(1-|\rho|^2)A_T$.
Consequently, for $k_n=O(\sqrt{T_n})$,
\begin{equation}\label{eq:gaussian-multifactor-conditional}
 \widehat\sigma(k_n,T_n)^2
 =\E\!\left[\frac{A_{T_n}}{T_n}
       \mathrel{\Big|}S_{T_n}=S_0e^{k_n}\right]+o(T_n^H).
\end{equation}

\end{example}

\paragraph{Gaussian coefficient and volatility skew.}
In Example~\ref{ex:gaussian-multifactor}, an additional kernel limit
identifies the coefficient explicitly.  Suppose that
\begin{equation}\label{eq:gaussian-kernel-limit}
 t^{1/2-H}K_i(t,tu)\longrightarrow\kappa_i(u)
 \quad\hbox{in }L^2(0,1).
\end{equation}
Write $v_0=f(0,0)$ and $f_i=\partial_i f(0,0)$.
Define
\begin{equation}\label{eq:gaussian-cross-covariance}
 \Sigma_{12}:=\frac1{\sqrt{v_0}}
   \sum_{i=1}^d\rho_i f_i\int_0^1\kappa_i(u)\,\mathrm du.
\end{equation}
Then, for fixed $x\in\mathbb R$ and $\bar v_T=v_T/T$,
\begin{equation}\label{eq:rough-gaussian-skew}
 \widehat\sigma(x\sqrt{T_n},T_n)
 =\sqrt{\bar v_{T_n}}
  +\frac{\Sigma_{12}x}{2\sqrt{v_0}(H+3/2)}T_n^H
  +o(T_n^H).
\end{equation}
This is the Gaussian coefficient of
\cite[Corollary~2.1]{Fukasawa2021Rough}.  The deterministic time
dependence is retained in $\bar v_T$ and need not have a power-law
expansion.

\begin{proof}[Derivation of \eqref{eq:rough-gaussian-skew}]
Taylor's formula and Gaussian moments give
\[
 V_t=f(t,0)+\sum_i\partial_i f(t,0)X_t^i+O_{L^2}(t^{2H}),
 \qquad m(t)=f(t,0)+O(t^{2H}).
\]
Thus $V_t-m(t)=\sum_i f_iX_t^i+o_{L^2}(t^H)$.
The log return divided by $\sqrt t$ differs in $L^2$ by $o(1)$ from
$\sqrt{v_0}(\sum_i\rho_iB_t^i+
\sqrt{1-|\rho|^2}B_t^\perp)/\sqrt t$.
Replacing $\log(S_t/S_0)$ by $S_t/S_0-1$ changes the normalized
quantity by $o_{\Q}(1)$.
It follows from \eqref{eq:gaussian-kernel-limit} that
$((S_t/S_0-1)/\sqrt t,U_t)$ converges to a centered Gaussian
pair $(\xi,\eta)$ with $\E[\xi^2]=v_0$ and
$\E[\xi\eta]=\Sigma_{12}$.
To evaluate the terminal conditional expectation directly, set
\[
 L_T=\frac1{T^{H+1}}\sum_i f_i\int_0^T X_t^i\,\mathrm dt,
 \qquad R_T^0=\frac{\sqrt{v_0}}{\sqrt T}
    \left(\sum_i\rho_iB_T^i+\sqrt{1-|\rho|^2}B_T^\perp\right).
\]
Then $T^{-H}(A_T/T-\bar v_T)=L_T+o_{L^2}(1)$,
$\sup_T\E[L_T^2]<\infty$, and kernel scaling gives
\[
 \operatorname{Cov}(L_T,R_T^0)
 \longrightarrow\frac{v_0\Sigma_{12}}{H+3/2},
 \qquad \operatorname{Var}(R_T^0)=v_0.
\]
Put $d_0=(1-|\rho|^2)v_0$ and define the density of $R_T^0$
conditional on the factor Brownian paths by
\[
 H_T^0(x)=\frac1{\sqrt{d_0}}
 \phi\!\left(\frac{x-\sqrt{v_0}\sum_i\rho_iB_T^i/\sqrt T}
                  {\sqrt{d_0}}\right).
\]
Writing $h_T$ for the kernel in \eqref{eq:conditional-kernel},
we have $\sqrt T h_T(x\sqrt T)-H_T^0(x)\to0$ in $L^2$.
Indeed, $J_T/\sqrt T$ approaches its frozen Gaussian counterpart,
$D_T/T\to(1-|\rho|^2)v_0$, and $A_T/\sqrt T\to0$ in probability;
the inverse moments above and the kernel bound
\eqref{eq:kernel-bound} give uniform integrability of every fixed
power of the scaled density.  The denominator tends to the
positive $\mathcal N(0,v_0)$ density at $x$.
Moreover, Gaussian regression gives the exact identity
\[
 \frac{\E[L_TH_T^0(x)]}{\E[H_T^0(x)]}
 =\frac{\operatorname{Cov}(L_T,R_T^0)}{v_0}x.
\]
The $L^2$ bounds justify the numerator replacements in
\eqref{eq:conditional-A}, yielding
\begin{equation}\label{eq:rough-gaussian-conditional-coefficient}
 \E\!\left[\frac{A_{T_n}}{T_n}
       \mathrel{\Big|}S_{T_n}=S_0e^{x\sqrt{T_n}}\right]
 =\bar v_{T_n}
   +\frac{\Sigma_{12}x}{H+3/2}T_n^H+o(T_n^H).
\end{equation}
Taking the square root in
\eqref{eq:gaussian-multifactor-conditional} proves
\eqref{eq:rough-gaussian-skew}.  This argument uses only the
local martingale property of $S$.
\end{proof}

For distinct fixed $x,y$, \eqref{eq:rough-gaussian-skew} gives
\[
 \frac{\widehat\sigma(x\sqrt{T_n},T_n)
       -\widehat\sigma(y\sqrt{T_n},T_n)}{(x-y)\sqrt{T_n}}
 =\frac{\Sigma_{12}}{\sqrt{v_0}(2H+3)}T_n^{H-1/2}
       +o(T_n^{H-1/2}).
\]
This is a finite-difference skew statement; differentiation of the
remainder is not needed.
For Ornstein--Uhlenbeck factors,
$K_i(t,s)=\nu_i e^{-\lambda_i(t-s)}$ with fixed
$\lambda_i>0$, one has $H=1/2$ and $\kappa_i=\nu_i$.
A fixed finite collection of these factors therefore has a finite
limiting skew.  Kernels
$K_i(t,s)=\nu_i(t-s)^{H-1/2}$ with $H<1/2$ instead yield
$\kappa_i(u)=\nu_i(1-u)^{H-1/2}$ and a skew of order
$T_n^{H-1/2}$ when $\Sigma_{12}\ne0$.
The distinction agrees with the roughness mechanism discussed in
\cite{Fukasawa2021Rough}; a finite set of fixed mean-reversion rates
cannot produce that blow-up in the limit $T_n\to0$.

\subsection{Small volatility-of-volatility}
\label{subsec:small-vov}

Small volatility-of-volatility expansions are regular perturbations
around a deterministic variance path.  Lewis~\cite[Chapter~3]{Lewis2000}
develops a power-series approach to option valuation in this regime under the Heston model.
Benhamou, Gobet and Miri~\cite{BGM2010} derive expansions with error
estimates for the time-dependent Heston model using Malliavin calculus.
Fukasawa~\cite{Fukasawa2011Martingale,Fukasawa2026} gives martingale
expansion frameworks covering this regime.
Here we verify the conditional quadratic-variation formula directly
from SDE moment estimates.

Fix $T_n=T>0$ and let $a_n>0$ tend to zero.  Consider the general
stochastic-volatility SDE
\begin{equation}\label{eq:small-vov-SDE}
 \begin{aligned}
  \frac{\mathrm dV_t^n}{V^n_t}
 &=b(t,V_t^n)\,\mathrm dt+a_n c(t,V_t^n)\,\mathrm dW_t,
 & V_0^n&=v_*>0,\\
 \frac{\mathrm dS_t^n}{S_t^n}
 &=\sqrt{V_t^n}\left(\rho\,\mathrm dW_t
       +\sqrt{1-\rho^2}\,\mathrm dW_t^\perp\right),
 & S_0^n&=S_0,
 \end{aligned}
\end{equation}
where $W,W^\perp$ are independent standard Brownian motions and
$|\rho|<1$.  The coefficient $a_nc$ controls the
volatility-of-volatility.  Its deterministic zero-noise limit solves
\[
 \mathrm d\bar V_t=\bar V_t b(t,\bar V_t)\,\mathrm dt,
 \qquad \bar V_0=v_*,\qquad
 \bar A=\int_0^T\bar V_t\,\mathrm dt.
\]

Suppose that $b,c$ are bounded on $[0,T]\times(0,\infty)$,
continuous in $t$, and uniformly Lipschitz in their second argument.
The coefficients $vb(t,v)$ and $vc(t,v)$ are locally Lipschitz
with linear growth.  The variance SDE therefore has a unique
nonexplosive strong solution, and its exponential representation
\[
 V_t^n=v_*\exp\!\left\{
   \int_0^t\left(b(s,V_s^n)-\frac{a_n^2}{2}c(s,V_s^n)^2\right)
       \,\mathrm ds
   +a_n\int_0^t c(s,V_s^n)\,\mathrm dW_s\right\}
\]
shows that it stays strictly positive.  More precisely, apply this
identity up to exits from compact subsets of $(0,\infty)$;
bounded $b,c$ prevent the logarithm from diverging in finite time.
Since $\{a_n\}$ is bounded, exponential martingale estimates give,
for every finite $q\ge1$,
\begin{equation}\label{eq:small-vov-positive-negative-moments}
 \sup_n\E\!\left[\sup_{0\le t\le T}
       \bigl((V_t^n)^q+(V_t^n)^{-q}\bigr)\right]<\infty.
\end{equation}
Indeed, the martingale in the exponent has quadratic variation
bounded by $a_n^2T\|c\|_\infty^2$, and its drift is uniformly
bounded.  The deterministic solution satisfies
$v_*e^{-T\|b\|_\infty}\le\bar V_t
\le v_*e^{T\|b\|_\infty}$.

The Burkholder--Davis--Gundy and Gronwall inequalities give, for every
finite $p\ge2$,
\begin{equation}\label{eq:small-vov-moment}
 \left\|\sup_{0\le t\le T}|V_t^n-\bar V_t|\right\|_{L^p}
 \le C_{p,T}a_n.
\end{equation}
To justify the drift estimate despite the lack of global
Lipschitz continuity of $v\mapsto vb(t,v)$, use
\[
 vb(t,v)-\bar V_t b(t,\bar V_t)
 =(v-\bar V_t)b(t,v)
   +\bar V_t\{b(t,v)-b(t,\bar V_t)\}.
\]
Its absolute value is at most $C_T|v-\bar V_t|$.
The diffusion term is controlled by
\[
 \left\|\sup_{t\le T}
    \left|a_n\int_0^t V_s^n c(s,V_s^n)\,\mathrm dW_s\right|
      \right\|_{L^p}
 \le C_p a_n\|c\|_\infty
    \left\|\left(\int_0^T(V_s^n)^2\,\mathrm ds\right)^{1/2}
      \right\|_{L^p}
 =O(a_n)
\]
by \eqref{eq:small-vov-positive-negative-moments}.
Gronwall's inequality gives \eqref{eq:small-vov-moment}.
Consequently,
\[
 \|A^n-\bar A\|_{L^p}=O(a_n),\qquad
 |v_n-\bar A|=O(a_n),\qquad
 \|A^n-v_n\|_{L^p}=O(a_n).
\]
To control the inverse quadratic variation, Jensen's inequality
and \eqref{eq:small-vov-positive-negative-moments} give
\[
 \E[(A^n)^{-2}]
 \le T^{-3}\int_0^T\E[(V_t^n)^{-2}]\,\mathrm dt\le C_T.
\]
Since $v_n\to\bar A>0$, taking $r_n=a_n$ and applying
H\"older's inequality yields
\[
 \E[|Y^n|^2]
 =\frac1{a_n^2v_n}\E\!\left[\frac{(A^n-v_n)^2}{A^n}\right]
 \le\frac{\|A^n-v_n\|_{L^4}^2}{a_n^2v_n}
       \E[(A^n)^{-2}]^{1/2}=O(1).
\]
Thus Proposition~\ref{prop:Lp}, with $p=2$, verifies
Assumption~\ref{ass:qv}; its rate condition is
$a_n/\sqrt{v_n}\to0$.
Moreover, setting
\[
 J_t^n=\rho\int_0^t\sqrt{V_s^n}\,\mathrm dW_s,
 \qquad
 N_t^n=\sqrt{1-\rho^2}\int_0^t\sqrt{V_s^n}\,\mathrm dW_s^\perp,
\]
conditioning on $\sigma(W_s:0\le s\le T)$ verifies
Assumption~\ref{ass:smoothing} with $\underline c=1-\rho^2$.
For any bounded sequence $\{k_n\}$ and $K_n=S_0e^{k_n}$,
Theorem~\ref{thm:main} therefore yields
\begin{align}
 \widehat w^n(k_n,T)
 &=\E[A^n\mid S_T^n=K_n]+o(a_n),
 \label{eq:small-vov-total}\\
 \widehat\sigma^n(k_n,T)^2
 &=\frac1T\E[A^n\mid S_T^n=K_n]+o(a_n).
 \label{eq:small-vov-annualized}
\end{align}

The coefficient conditions above can be replaced by direct moment
estimates.  The same conclusion holds for positive strong solutions of
\eqref{eq:small-vov-SDE}, with $V^n$ adapted to the filtration generated
by $W$, whenever $v_n\to\bar A>0$,
$\|A^n-v_n\|_{L^4}=O(a_n)$, and
$\sup_n\E[(A^n)^{-2}]<\infty$, by the same argument.

\subsection{Fast mean reverting asymptotics}
\label{subsec:fast-mr}

Fast mean-reverting stochastic volatility leads to singular
perturbation expansions.  Fouque, Papanicolaou and
Sircar~\cite{FPS2000} develop this approach to option pricing;
Fouque, Papanicolaou, Sircar and S\o lna~\cite{FPSS2003} establish
error estimates for call options with their nonsmooth payoff.
Their monograph~\cite{FPSS2011} treats multiscale stochastic volatility
and its applications.  Khasminskii and Yin~\cite{KY2005} obtain uniform
asymptotic expansions and error bounds for fast diffusion models.
Fukasawa~\cite{Fukasawa2011Edgeworth} validates the expansion through
Edgeworth theory for ergodic diffusions.

Following the time-scale interpretation in
Fukasawa~\cite[Section~2.1]{Fukasawa2011Edgeworth}, start with a general
one-dimensional ergodic diffusion
\[
 \mathrm dX_u=b(X_u)\,\mathrm du+c(X_u)\,\mathrm dW_u
\]
on an interval $I$, with invariant probability measure $\pi$.
Assume that this SDE has a nonexplosive strong solution.  Let
$W^\perp$ be a Brownian motion independent of $(X_0,W)$, and fix
$T_n=T>0$.  Define the accelerated factor and Brownian motions by
\[
 X_t^n=X_{nt},\qquad
 W_t^n=n^{-1/2}W_{nt},\qquad
 W_t^{\perp,n}=n^{-1/2}W_{nt}^\perp.
\]
Then $W^n,W^{\perp,n}$ are independent standard Brownian motions in
the accelerated filtration, and
\begin{equation}\label{eq:fast-mr-SDE}
 \mathrm dX_t^n=nb(X_t^n)\,\mathrm dt
                 +\sqrt n\,c(X_t^n)\,\mathrm dW_t^n.
\end{equation}
For a positive variance function $f$ with
$\bar v:=\int_I f\,\mathrm d\pi\in(0,\infty)$, set
\begin{equation}\label{eq:fast-mr-price}
 \frac{\mathrm dS_t^n}{S_t^n}
 =\sqrt{f(X_t^n)}\left(
     \rho(X_t^n)\,\mathrm dW_t^n
     +\sqrt{1-\rho(X_t^n)^2}\,\mathrm dW_t^{\perp,n}\right),
 \qquad S_0^n=S_0.
\end{equation}
Here $1-\rho(x)^2\ge\underline c>0$ on $I$.
The stochastic exponential defining $S^n$ is a strictly positive
continuous local martingale.  The quadratic variation of its
stochastic logarithm is integrable under the moment conditions below.

The change of time variable gives
\begin{equation}\label{eq:fast-mr-time-change}
 A^n=\int_0^Tf(X_{nt})\,\mathrm dt
     =\frac1n\int_0^{nT}f(X_u)\,\mathrm du
     \longrightarrow T\bar v
\end{equation}
by the ergodic theorem.  Thus fast mean reversion is a long-time
averaging problem for $X$.  To obtain a first-order remainder, we
also need moment control of the centered additive functional.
The following sufficient conditions implement the martingale
representation used in \cite[Section~2.2]{Fukasawa2011Edgeworth}.

Let $g\in C^2(I)$ solve the Poisson equation
\begin{equation}\label{eq:fast-mr-Poisson}
 \mathcal Lg=f-\bar v,\qquad
 \mathcal Lg=bg'+\frac12c^2g'',\qquad h=cg',
\end{equation}
and suppose that
\begin{equation}\label{eq:fast-mr-moments}
 \sup_{u\ge0}\E\!\left[
      |g(X_u)|^4+|h(X_u)|^4+f(X_u)^{-2}\right]<\infty.
\end{equation}
These are conditions on the unscaled diffusion.  They do not require
stationary initialization or geometric mixing.  If $X_0\sim\pi$,
it suffices that $g,h\in L^4(\pi)$ and $f^{-1}\in L^2(\pi)$.
For a regular diffusion with invariant density $p$ and zero stationary
probability current, a solution can be constructed, whenever the
integrals and regularity permit, from
\[
 g'(x)=\frac{2}{c(x)^2p(x)}
          \int_{\ell}^{x}(f(y)-\bar v)p(y)\,\mathrm dy,
 \qquad \ell=\inf I.
\]
The moment condition \eqref{eq:fast-mr-moments} must still be checked
for this solution.

It\^o's formula yields the exact decomposition
\begin{equation}\label{eq:fast-mr-additive}
 A^n-T\bar v
 =\frac{g(X_{nT})-g(X_0)}{n}
   -\frac1n\int_0^{nT}h(X_u)\,\mathrm dW_u.
\end{equation}
The first term is $O(n^{-1})$ in $L^4$.
By the Burkholder--Davis--Gundy inequality and
\eqref{eq:fast-mr-moments},
\[
 \E\!\left[\left|\int_0^{nT}h(X_u)\,\mathrm dW_u\right|^4\right]
 \le C\E\!\left[\left(\int_0^{nT}h(X_u)^2\,\mathrm du\right)^2\right]
 \le C(nT)^2.
\]
Consequently,
\[
 v_n=T\bar v+O(n^{-1}),\qquad
 \|A^n-v_n\|_{L^4}=O(n^{-1/2}).
\]
In the stationary case $v_n=T\bar v$ exactly.
Jensen's inequality also gives
\[
 \E[(A^n)^{-2}]
 \le\frac1{T^3}\int_0^T\E[f(X_{nt})^{-2}]\,\mathrm dt,
\]
which is bounded uniformly in $n$.  With $r_n=n^{-1/2}$,
H\"older's inequality therefore yields
\[
 \E[|Y^n|^2]
 \le\frac{n}{v_n}\|A^n-v_n\|_{L^4}^2
                  \E[(A^n)^{-2}]^{1/2}=O(1).
\]
Proposition~\ref{prop:Lp} verifies Assumption~\ref{ass:qv}.

For the smoothing condition, take
\[
 J_t^n=\int_0^t\sqrt{f(X_s^n)}\rho(X_s^n)\,\mathrm dW_s^n,
 \qquad
 N_t^n=\int_0^t\sqrt{f(X_s^n)(1-\rho(X_s^n)^2)}
                    \,\mathrm dW_s^{\perp,n}.
\]
Conditionally on $\sigma(X_0,W_u:0\le u\le nT)$, $N^n$ is Gaussian
with independent increments and
\[
 D_T^n=\int_0^Tf(X_s^n)(1-\rho(X_s^n)^2)\,\mathrm ds
       \ge\underline c A^n.
\]
Thus Assumption~\ref{ass:smoothing} holds, and for any bounded
sequence $\{k_n\}$ and $K_n=S_0e^{k_n}$,
\begin{align}
 \widehat w^n(k_n,T)
 &=\E[A^n\mid S_T^n=K_n]+o(n^{-1/2}),
 \label{eq:fast-mr-total}\\
 \widehat\sigma^n(k_n,T)^2
 &=\frac1T\E[A^n\mid S_T^n=K_n]+o(n^{-1/2}).
 \label{eq:fast-mr-annualized}
\end{align}
The role of ergodicity is the averaging in
\eqref{eq:fast-mr-time-change}; the Poisson-equation moment bounds
supply the rate and uniform integrability required by the present
theorem. 

\section{Proofs of the main results}

\begin{proof}[Proof of Theorem~\ref{thm:main}]
We suppress $n$ from terminal random variables when this causes
no ambiguity.

\subsection{The terminal density}

By Assumption~\ref{ass:smoothing}, conditionally on
$\mathscr H^n$,
\[
 \log\frac{S_{T_n}^n}{S_0}
 =J_{T_n}^n+N_{T_n}^n
  -\frac12A^n
\]
is Gaussian with conditional variance $D_{T_n}^n$.
Its conditional density at $k$ is
\begin{equation}\label{eq:conditional-kernel}
 h^n(k)
 =\frac1{\sqrt{D_{T_n}^n}}
  \phi\!\left(
   \frac{k-J_{T_n}^n+A^n/2}
        {\sqrt{D_{T_n}^n}}
  \right).
\end{equation}
The clock lower bound gives
\begin{equation}\label{eq:kernel-bound}
 0<h^n(k)
 \le\frac1{\sqrt{2\pi\underline c A^n}}.
\end{equation}
Consequently $S_{T_n}^n$ has the continuous positive
density
\[
 s\longmapsto\frac1s
 \E\!\left[h^n\!\left(\log\frac{s}{S_0}\right)\right].
\]
The continuous version used in the theorem is
\begin{align}
 m^n(k)
 &:=\E[A^n\mid S_{T_n}^n=S_0e^k]
 \nonumber\\
 &=\frac{\E[A^n h^n(k)]}
         {\E[h^n(k)]}
 \label{eq:conditional-A}
\end{align}
The denominator is finite because $(Z^n)^{-1/2}$ is integrable,
as noted after Assumption~\ref{ass:qv}, and by
\eqref{eq:kernel-bound}.  For the numerator, Jensen's inequality gives
\[
 \E[A^n h^n(k)]
 \le \frac{\E[\sqrt{A^n}]}{\sqrt{2\pi\underline c}}
 \le \sqrt{\frac{v_n}{2\pi\underline c}}.
\]
The bounds by $(A^n)^{-1/2}$ and $(A^n)^{1/2}$ also justify
continuity of the denominator and numerator by dominated convergence.
The terminal density is strictly positive on $(0,\infty)$, so
\[
 (K_n-S_0)_+<\E[(K_n-S_{T_n}^n)_+]<K_n.
\]
Indeed, positivity makes $S^n$ a supermartingale, so
$\E[S_{T_n}^n]\le S_0$.  Positive probability on both sides of the
strike gives strict Jensen inequality:
\[
 \E[(K_n-S_{T_n}^n)_+]
 >(K_n-\E[S_{T_n}^n])_+\ge(K_n-S_0)_+.
\]
The upper bound follows from $S_{T_n}^n>0$ almost surely.
Hence the put-implied total variance exists uniquely in $(0,\infty)$.

\subsection{A scaled density limit}

We claim that
\begin{equation}\label{eq:density-limit}
 \sqrt{v_n}\E[h^n(k_n)]
 -\phi(z_n)\longrightarrow0.
\end{equation}
Since $r_n\to0$, uniform integrability of $\{Y^n\}$ and
$\E[Z^n]=1$ give
\[
 Z^n-1=r_nY^n\sqrt{Z^n}\longrightarrow0
 \quad\text{in probability}.
\]
Indeed, both $\{Y^n\}$ and $\{\sqrt{Z^n}\}$ are tight.
Hence the continuous-martingale
central limit theorem gives
\begin{equation}\label{eq:martingale-CLT}
 \frac{M_{T_n}^n}{\sqrt{v_n}}
 \Longrightarrow \mathcal N(0,1);
\end{equation}
see, for example, Jacod and Shiryaev~\cite[Chapter~VIII]{JS}.

Take a subsequence along which $z_n\to z$.  The vectors
\[
 \left(
   \frac{J_{T_n}^n}{\sqrt{v_n}},
   \frac{D_{T_n}^n}{v_n}
 \right)
\]
are tight.  Indeed, the second component is bounded in probability by
$Z^n$.  Conditional Gaussianity makes
$N_{T_n}^n/\sqrt{v_n}$ tight, and
\eqref{eq:martingale-CLT} then gives tightness of the first component.
Along a further subsequence, denote the limit by $(U,\Gamma)$.  The
conditional characteristic-function identity implied by
\eqref{eq:conditional-time-change} is
\[
 \E\!\left[e^{itM_{T_n}^n/\sqrt{v_n}}\right]
 =\E\!\left[
   e^{itJ_{T_n}^n/\sqrt{v_n}
      -t^2D_{T_n}^n/(2v_n)}
 \right].
\]
Passing to the limit gives
\begin{equation}\label{eq:normal-mixture}
 U+\sqrt\Gamma G\ \stackrel{d}{=}\ \mathcal N(0,1),
\end{equation}
where $G$ is standard normal and independent of $(U,\Gamma)$, and
$\Gamma\ge\underline c$.

Since $\{Y^n\}$ and $\{\sqrt{Z^n}\}$ are tight,
\[
 \frac{A^n-v_n}{\sqrt{v_n}}
 =r_n\sqrt{v_n}\,
   Y^n\sqrt{Z^n}\longrightarrow0
 \quad\text{in probability}.
\]
It follows from \eqref{eq:conditional-kernel} that
\[
 \sqrt{v_n}h^n(k_n)
 \Longrightarrow
 \frac1{\sqrt\Gamma}\phi\!\left(
       \frac{z-U}{\sqrt\Gamma}\right).
\]
The random variable on the right is the conditional density of
$U+\sqrt\Gamma G$ at $z$.  Since $\Gamma\ge\underline c$, its
expectation is continuous in $z$; equality in law in
\eqref{eq:normal-mixture} therefore makes it equal to $\phi(z)$ at
every $z$.  Moreover,
\[
 \sqrt{v_n}h^n(k_n)
 \le\frac1{\sqrt{2\pi\underline c}}
      (Z^n)^{-1/2}.
\]
Uniform integrability proves \eqref{eq:density-limit} by the
subsequence criterion.  Moreover,
\begin{equation}\label{eq:conditional-shift}
 \frac{m^n(k_n)-v_n}
      {v_n r_n}
 =\frac{\E[Y^n\sqrt{Z^n}
                 h^n(k_n)]}
        {\E[h^n(k_n)]}
 =O(1),
\end{equation}
because
\[
 \sqrt{v_n}\,
 \E[|Y^n|\sqrt{Z^n}
                 h^n(k_n)]
 \le\frac{\E[|Y^n|]}{\sqrt{2\pi\underline c}}.
\]

\subsection{The option-price expansion}

We extend the stopped Black--Scholes argument of
\cite[Section~4.1]{Fukasawa2026} to the present variance clocks.

Define
\[
 L_t^n=\cE(J^n)_t=\exp\{J_t^n-C_t^n/2\}.
\]
Orthogonality gives $S_t^n=S_0L_t^n\cE(N^n)_t$.
Let $p_n(s,q)=\Pbs(s,K_n,q)$, and write
$C=C^n$, $D=D^n$.  Set
\[
 b_n=(1-\underline c/2)v_n,
 \qquad
 \tau_n=\inf\{t:C_t\ge b_n\},
 \qquad
 q_t^n=v_n-C_{t\wedge\tau_n}.
\]
Then $q_t^n\ge\underline c v_n/2$.  By
\eqref{eq:BS-PDE} and It\^o's formula,
\[
 p_n(S_0L_{t\wedge\tau_n}^n,q_t^n)
\]
is a bounded local martingale, hence a martingale.  In particular,
\begin{equation}\label{eq:stopped-BS}
 \E[p_n(S_0L_{T_n\wedge\tau_n}^n,
                   q_{T_n}^n)]
 =p_n(S_0,v_n).
\end{equation}

Choose $\eta_0\in(0,1)$ such that
$(1-\underline c)(1+\eta_0)<1-\underline c/2$, and put
\[
 G_n=\{|Z^n-1|\le\eta_0\}.
\]
On $G_n$ we have
$C_{T_n}\le(1-\underline c)A^n<b_n$ and hence
$\tau_n>T_n$.  Since a put is bounded by its strike,
\eqref{eq:short-time-localization},
\eqref{eq:stopped-BS}, and conditional Gaussianity give
\begin{align}
 &\E[(K_n-S_{T_n}^n)_+]
      -p_n(S_0,v_n)
 \nonumber\\
 &\quad=
 \E\left[
 p_n(S_0L_{T_n}^n,D_{T_n})
 -p_n(S_0L_{T_n}^n,
             v_n-C_{T_n});G_n
 \right]+o\!\left(K_n r_n
                         \sqrt{v_n}\right).
 \label{eq:localized-price}
\end{align}

Both variance arguments in \eqref{eq:localized-price}, divided by
$v_n$, lie in a fixed compact subset of $(0,\infty)$.  From
\eqref{eq:BS-PDE}, after scaling $q=v_n u$, the family
\[
 F_n(s,u):=\frac{\sqrt{v_n}}{K_n}
 \partial_qp_n(s,v_n u)
 =\frac{\phi(d)}{2\sqrt u},
 \qquad
 d=\frac{x}{\sqrt u}-\frac{\sqrt{v_n u}}2,
 \quad x=\frac{\log(s/K_n)}{\sqrt{v_n}},
\]
satisfies
\[
 \partial_uF_n(s,u)
 =\frac{(d^2-1)\phi(d)}{4u^{3/2}}
  +\frac{\sqrt{v_n}\,d\phi(d)}{4u}.
\]
Consequently, on this compact set,
\[
 |F_n(s,u)|\le C,
 \qquad
 |\partial_uF_n(s,u)|
 \le C(1+\sqrt{v_n}),
\]
uniformly over $s>0$.  Taylor's formula at $D_{T_n}$ and
\[
 D_{T_n}-(v_n-C_{T_n})
 =A^n-v_n
 =v_n r_n Y^n\sqrt{Z^n}
\]
therefore yield
\begin{align}
 &p_n(S_0L_{T_n}^n,D_{T_n})
 -p_n(S_0L_{T_n}^n,
             v_n-C_{T_n})
 \nonumber\\
 &\qquad=v_n r_n
       Y^n\sqrt{Z^n}
       \partial_qp_n(S_0L_{T_n}^n,
                            D_{T_n})
       +R_n,
 \label{eq:Taylor}
\end{align}
where 
\[
 R_n=K_n\sqrt{v_n}\,\Delta_n
       \int_0^1\bigl\{F_n(s_n,u_n-\theta\Delta_n)
                         -F_n(s_n,u_n)\bigr\}\,\mathrm d\theta.
\]
and  $s_n=S_0L_{T_n}^n$,
$u_n=D_{T_n}/v_n$, and $\Delta_n=Z^n-1$.

The whole segment $u_n-\theta\Delta_n$ lies in the same compact
set.  Boundedness of $F_n$ and the mean-value estimate give,
respectively,
\[
 \left|F_n(s_n,u_n-\theta\Delta_n)-F_n(s_n,u_n)\right|
 \le \min\{2C,C(1+\sqrt{v_n})\theta|\Delta_n|\}.
\]
Consequently,
\[
 |R_n|\le C K_n\sqrt{v_n}|\Delta_n|
      \min\{1,(1+\sqrt{v_n})|\Delta_n|\}
 \quad\text{on }G_n.
\]
Using $|\Delta_n|/r_n=|Y^n|\sqrt{Z^n}$ and
$\sqrt{Z^n}\le\sqrt{1+\eta_0}$ on $G_n$ yields
\[
 \frac{|R_n|}{K_n r_n\sqrt{v_n}}
 1_{G_n}
 \le C|Y^n|
 \min\!\left\{1,
  (1+\sqrt{v_n})|Z^n-1|\right\}1_{G_n}.
\]
The factor inside the minimum converges to zero in probability because
\[
 (1+\sqrt{v_n})|Z^n-1|
 =(r_n+r_n\sqrt{v_n})
   |Y^n|\sqrt{Z^n}.
\]
Uniform integrability in Assumption~\ref{ass:qv} therefore gives
\[
 \E[|R_n|;G_n]
 =o\!\left(K_n r_n\sqrt{v_n}\right).
\]
Furthermore, \eqref{eq:BS-PDE} and
\eqref{eq:conditional-kernel} give the exact identity
\begin{equation}\label{eq:vega-kernel}
 \partial_qp_n(S_0L_{T_n}^n,
                      D_{T_n})
 =\frac{K_n}{2}h^n(k_n).
\end{equation}
The contribution of $G_n^c$ to
$\E[Y^n\sqrt{Z^n}
        h^n(k_n)]$ is $o(v_n^{-1/2})$
by uniform integrability in Assumption~\ref{ass:qv},
\eqref{eq:kernel-bound}, and
$\Q(G_n^c)\to0$.  Combining the preceding displays gives
\begin{equation}\label{eq:price-expansion}
 \E[(K_n-S_{T_n}^n)_+]
 =p_n(S_0,v_n)
  +\frac{v_n r_n K_n}{2}
       \E[Y^n\sqrt{Z^n}
                    h^n(k_n)]
  +o\!\left(K_n r_n\sqrt{v_n}\right).
\end{equation}

\subsection{Inversion}

By \eqref{eq:BS-PDE},
\begin{equation}\label{eq:reference-vega}
 \partial_qp_n(S_0,v_n)
 =\frac{K_n}{2\sqrt{v_n}}\phi(z_n).
\end{equation}
Equations \eqref{eq:conditional-A}, \eqref{eq:density-limit}, and
\eqref{eq:price-expansion} imply
\begin{equation}\label{eq:price-at-conditional-A}
 \E[(K_n-S_{T_n}^n)_+]
 =p_n(S_0,m^n(k_n))
  +o\!\left(K_n r_n\sqrt{v_n}\right).
\end{equation}
Indeed, \eqref{eq:conditional-shift} gives
$m^n(k_n)/v_n=1+O(r_n)$.  On this scale,
the derivative bound preceding \eqref{eq:Taylor} and
$r_n(1+\sqrt{v_n})\to0$ justify the first-order
Black--Scholes expansion.  Its leading term is
\[
 \partial_qp_n(S_0,v_n)
 \bigl(m^n(k_n)-v_n\bigr)
 =\frac{K_n v_n r_n}{2}
   \E[Y^n\sqrt{Z^n}h^n(k_n)]
   \frac{\phi(z_n)}
        {\sqrt{v_n}\E[h^n(k_n)]}.
\]
The last factor tends to one by \eqref{eq:density-limit}.

To invert \eqref{eq:price-at-conditional-A}, fix $\delta>0$.  Since
$m^n(k_n)/v_n=1+O(r_n)$ and $r_n\to0$, the lower endpoint below
is positive for all sufficiently large $n$.  For all
$q$ between
\[
 m^n(k_n)-\delta v_n r_n
 \quad\text{and}\quad
 m^n(k_n)+\delta v_n r_n,
\]
we have $q/v_n=1+O(r_n)$.  Moreover,
\[
 d_2(S_0,K_n,v_n u)
 =-\frac{z_n}{\sqrt u}
   +\frac{\sqrt{v_n}(1-u)}{2\sqrt u}.
\]
Thus $\partial_qp_n(S_0,q)$ is bounded below by
$cK_n/\sqrt{v_n}$ throughout this interval.  Strict
monotonicity and \eqref{eq:price-at-conditional-A} imply, for every
$\delta>0$ and all sufficiently large $n$,
\[
 \left|\widehat w^n(k_n,T_n)
       -m^n(k_n)\right|
 \le\delta v_n r_n.
\]
Consequently,
\[
 \widehat w^n(k_n,T_n)
 -m^n(k_n)=o(v_n r_n).
\]
This is \eqref{eq:main-result}.
\end{proof}

\section*{Acknowledgment of the use of generative AI}

Generative AI was used to assist with language editing and the
organization of an initial draft.  The author remains responsible for
the mathematical content.

\end{document}